\documentclass[10pt,journal]{IEEEtran}
\usepackage{amsthm,amsmath}
\usepackage{booktabs}
\usepackage{pgfplots}
\usepackage{epstopdf}
\usepackage{graphicx}
\usepackage{amsfonts}
\usepackage{amssymb}

\usepackage{cool}
\usepackage{float}
\usepackage{tikz}
\usepackage{url}
\usetikzlibrary{shapes,arrows,backgrounds}
\usetikzlibrary{calc,fit,arrows}
\usepackage{caption}
\usepackage{subcaption}
\usepackage{adjustbox}
\usepackage{lineno}
\usepackage{setspace}
\usepackage{stfloats} 
\usepackage{array, multirow}
\newcolumntype{C}[1]{>{\centering\arraybackslash}p{#1}}
\usetikzlibrary{shapes}
\usepgfplotslibrary{groupplots}
\pgfplotsset{compat=1.13}
\usepackage{setspace}
\usetikzlibrary{intersections}
\usepgfplotslibrary{fillbetween}
\newtheorem{theorem}{Theorem}

\newtheorem{lemma}{Lemma}

\newtheorem{prop}{Proposition}

\newtheorem{remark}{Remark}

\DeclareMathOperator{\diag}{diag}
\DeclareMathOperator{\tr}{Tr}

\allowdisplaybreaks

\begin{document}


\title{Capacity Bounds for Optical Interference Channels Subject to Average Power Constraints}%

\author{Viswanathan~Ramachandran
\thanks{
The author is with the Department of Electronics and Electrical Communication Engineering, Indian Institute of Technology Kharagpur, West Bengal 721302, India (email: \protect\url{vramachandran@ece.iitkgp.ac.in}).}
}
\maketitle

\begin{abstract}
Accurate information-theoretic analysis of multiuser wavelength-division multiplexed (WDM) optical links is complicated by Kerr nonlinearity and chromatic dispersion. Existing capacity bounds for first-order regular perturbative (RP) channel models have relied on per-symbol peak power constraints to simplify the underlying extremal analysis. In this paper, we replace these peak constraints with the more practically relevant average input power constraints and study the resulting RP WDM interference channel. Unlike the peak-constrained formulation, an exact finite signal-to-noise ratio (SNR) analysis under average power leads in general to a nonconvex, high-dimensional optimization problem with no closed-form solution.

We show that the problem admits a tractable characterization in the high-SNR regime. Using a genie-aided argument, we derive closed-form single-user high-SNR capacity bounds for the RP interference channel and identify constant-amplitude interferer signaling as extremal. Furthermore, under an empirically validated non‑negativity condition on the dominant RP kernels and within a truncated first‑order RP model, we establish a finite‑SNR extremality result, yielding a single‑user capacity characterization within the RP model under average power constraints. Building on these single-user corner points, we construct an \emph{all-helps-one} achievable region via time sharing, which strictly improves upon treating nonlinear interference as noise. Numerical results and split-step Fourier simulations of the nonlinear Schrödinger equation channel confirm that the qualitative conclusions persist over practical launch powers. The results provide new insight into extremal signaling and multiuser trade-offs for nonlinear WDM systems within the RP approximation.
\end{abstract}

\begin{IEEEkeywords}
Wavelength division multiplexing, interference channels, capacity region, extremal analysis, high-SNR.
\end{IEEEkeywords}

\section{Introduction}
Nonlinear interference caused by the optical \emph{Kerr effect} fundamentally couples users together in wavelength-division multiplexed (WDM) links and limits the achievable data rates on long-haul fibre systems, especially at high values of launch powers. The optical field corresponding to each wavelength channel both experiences as well as induces cross-phase modulation (XPM), so that the interference seen by any user depends upon the launch signals of all users. In modern dispersion-uncompensated links, XPM is the dominant impairment and is usually treated at the receiver as additional noise after matched filtering~\cite{secondini2013achievable,dar2013properties}. While practical digital backpropagation and dispersion compensation can reduce intra-channel effects such as self-phase modulation (SPM), they cannot compensate for inter-channel nonlinearities such as XPM.

Information-theoretic methods have been increasingly applied to quantify fundamental limits of optical transmission~\cite{essiambre2010capacity,kramer2015upper,agrell2015influence,secondini2016scope}. Most studies, however, analyze the achievable rates for a single WDM user while assuming fixed, often Gaussian, statistics for the interfering users. For example, \cite{secondini2016scope} showed that the single-channel capacity under ideal dispersion management grows unboundedly with launch power, in contrast to Gaussian achievable information rates that saturate. Reference \cite{agrell2015influence} examined how different behavioural models for the interfering users alter the capacity of a given user of interest. Outer bounds on the single-user nonlinear Schr\"odinger equation (NLSE) channel have been given in \cite{kramer2015upper}, but no multiuser region characterizations followed.

Within multiuser information theory, channels where several transmitter–receiver pairs interfere but do not cooperate are modelled as interference channels~\cite[Ch.~6]{el2011network}. This abstraction is particularly relevant to WDM systems, where wavelengths typically carry independent data streams and joint encoding or coordinated transmission across transceivers is generally infeasible. Optical interference channels have received limited attention in the literature. While \cite{taghavi2006multiuser} recast WDM as a multiple-access channel with full receiver cooperation, and \cite{ghozlan2017models} introduced interference focusing under a simplified dispersionless model, the simultaneously achievable rate region for realistic WDM channels remains largely uncharacterized.

A recent attempt to bridge this gap is the author’s prior work~\cite{ramachandran2023capacity,viswanathan2021wdm,viswanathan2022wdm}, which considered the first-order regular perturbative (RP) model and derived inner and outer bounds by imposing per-symbol \emph{peak} power constraints. Peak constraints simplify the extremal signalling analysis but are not the operational metric in optical systems, where average launch power per channel is specified. As a result, the capacity region under the practically relevant \emph{average} power constraints has remained open, with no inner/outer bounds available.

This paper addresses this crucial gap. Unlike per-symbol peak constraints, the average-power formulation leads at finite signal-to-noise ratio (SNR) to a nonconvex, high-dimensional optimization problem for which no closed-form capacity characterization is possible in general. The main contribution of this paper is to show that, despite the inherent non-convexity of the problem, the extremal signaling structure can be identified in the high-SNR regime and, under an empirically validated sign condition, also at finite SNR. Although the channel model coincides with \cite{ramachandran2023capacity}, none of the extremal or capacity characterizations derived here can be obtained by a post-hoc relaxation of peak constraints, and requires different arguments based on convex-analysis and extremal geometry. Specifically, we derive, for the first time, closed-form high-SNR capacity region outer bounds with coordinate-wise matching inner bounds for the RP WDM interference channel under average power constraints. Moreover, under a non-negativity condition on the perturbative kernels, we establish a non-asymptotic extremality result yielding a finite-SNR single-user capacity corner point characterization.\footnote{All capacity characterizations in this paper are understood to be with respect to the first‑order RP model \eqref{eq:Kuserapprox}; they do not claim to describe the physical NLSE channel beyond the validity regime of the RP approximation.} 

\noindent\textbf{Main Contributions:}
\begin{itemize}
\item We derive a new genie-aided outer bound for the WDM interference channel under average power constraints in the high-SNR regime, in closed form (Theorem~\ref{thm:OBgenieK}).

\item We characterize the single-user high-SNR capacity corner points and identify constant-amplitude interferer signaling to be optimal/extremal (Proposition~\ref{prop:extreme}).

\item Under an empirically validated non-negativity condition on the dominant RP kernels, we establish a finite-SNR extremality result, yielding a finite‑SNR single‑user capacity corner point characterization (Propositions~\ref{thm:OBgenieKfinite}
and~\ref{prop:finiteSNR-extreme}).

\item We construct an achievable region via time sharing among the single-user capacity corner points, termed the \emph{all-helps-one} (AHO) scheme, which strictly improves upon treating interference as noise (Proposition~\ref{prop:allhelpsone}).
\end{itemize}

The significance of the asymptotic characterizations is that the high-SNR regime is precisely where conventional TIN approaches break down. Moreover, the finite-SNR characterizations closely match achievable rates at practical launch powers 0--5\,dBm, providing both engineering insight and new theoretical limits. Together, these results extend earlier peak-constrained analyses and provide a tractable framework for understanding nonlinear interference in WDM systems.

\emph{Paper organisation:} Section~\ref{sec:model} introduces the channel model. Outer bounds and single-user high-SNR corner points are derived in Section~\ref{sec:Kuser}, and a finite-SNR characterization is established in Section~\ref{sec:Kuserfinite}. Section~\ref{sec:ach} presents the all-helps-one achievable region. Proofs of the main technical results are deferred to the appendices. Numerical results and conclusions appear in Sections~\ref{sec:num} and~\ref{sec:conc}, respectively.

\begin{figure*}[t]
\begin{center}
\scalebox{0.85}{\begin{tikzpicture}[ultra thick]
\node (e1) at (-2.5,1.5) [rectangle, draw, minimum height=1.0cm]{\textbf{Tx}-$\boldsymbol{1}$}; 
\node (e2) at (-2.5,-1.5) [rectangle, draw, minimum height=1.0cm]{\textbf{Tx}-$\boldsymbol{K}$};
\node (e3) at (-2.5,0) [rectangle, draw, minimum height=1.0cm]{\textbf{Tx}-$\boldsymbol{2}$};
\foreach \i in {-1.3,-1.1,-0.9} {\fill (-1.5,\i) circle(0.05cm);}
\node[trapezium,draw,rotate=-90,minimum height=1.1cm] (t1) at (-0.25,0) {\small \textbf{MUX}};
\node[trapezium,draw,rotate=90,minimum height=0.9cm,minimum width=0.3cm] (t2) at (5.05,0) {\small \textbf{DEMUX}};
\node (c1) at (1.4,0) [circle, draw, minimum height=0.5cm]{};
\node (c2) at (1.3,0) [circle, draw, minimum height=0.5cm]{};
\node[isosceles triangle, draw, minimum size =1.1cm] (T)at (2.5,-0.25){};
\node (d1) at (6.5,1.5) [rectangle, draw, right, minimum height=1.0cm]{\textbf{Rx}-$\boldsymbol{1}$};
\node (d2) at (6.5,-1.5) [rectangle, draw, right, minimum height=1.0cm]{\textbf{Rx}-$\boldsymbol{K}$};
\node (d3) at (6.5,0) [rectangle, draw, right, minimum height=1.0cm]{\textbf{Rx}-$\boldsymbol{2}$};
\foreach \i in {-1.3,-1.1,-0.9} {\fill (6,\i) circle(0.05cm);}
\node () at (1.40,-0.5) {\small \textbf{SMF}};
\node () at (2.65,-0.25) {\small \textbf{EDFA}};
\node (e5) at (2.20,-0.2) [rectangle, dashed, draw, minimum height=1.6cm, minimum width=2.8cm]{};
\node () at (-1.45,1.8) {$X_1^n$};
\node () at (-1.45,-1.8) {$X_K^n$};
\node () at (-1.45,-0.4) {$X_2^n$};
\node () at (6,1.8) {$Y_1^n$};
\node () at (6,-1.8) {$Y_K^n$};
\node () at (6,-0.4) {$Y_2^n$};
\draw[<-] (e1) --++(-1.0,0) node[left]{$M_1$};
\draw[<-] (e2) --++(-1.0,0) node[left]{$M_K$};
\draw[<-] (e3) --++(-1.0,0) node[left]{$M_2$};
\draw[->] (d1) --++(1.0,0) node[right]{$\hat{M}_1$};
\draw[->] (d2) --++(1.0,0) node[right]{$\hat{M}_K$};
\draw[->] (d3) --++(1.0,0) node[right]{$\hat{M}_2$};
\draw[->] (T.east)  --++(1.05,0) node[above]{};
\draw[<-] (T.-180)  -- (T.180-|t1.90);
\draw[->,color=red] (e1) --++(1.67,0) node[above]{};
\draw[->,color=blue] (e2) --++(1.67,0) node[above]{};
\draw[->,color=green] (e3) --++(1.67,0) node[above]{};
\draw[<-,color=red] (d1.west) --++(-0.97,0) node[midway, above]{};
\draw[<-,color=blue] (d2.west) --++(-0.97,0) node[midway, above]{};
\draw[<-,color=green] (d3.west) --++(-0.97,0) node[midway, above]{};
\end{tikzpicture}}
\caption{Regular perturbative interference channel model for multiuser WDM transmission.}
\label{fig:ICmodel}
\end{center}
\end{figure*}

\section{Channel Model}\label{sec:model}
We adopt the standard first-order regular perturbative (RP) discrete-time model for nonlinear WDM links
\cite{secondini2013achievable,dar2013properties,mecozzi2012nonlinear}, obtained by expanding the noisy nonlinear Schr\"odinger equation (NLSE) to first order in the Kerr nonlinearity and sampling after ideal dispersion compensation and matched filtering. The model captures cross-phase modulation (XPM), which dominates nonlinear interference (NLI) in this regime, while assuming ideal compensation of self-phase modulation and neglecting higher-order signal-noise interactions.

The complex baseband field $A(t,z)$ along a single-polarization optical fiber obeys the NLSE~\cite{agrawal2000nonlinear}
\begin{align} \label{eq:nlse}
\frac{\partial A}{\partial z} = -\frac{\alpha}{2}A
+ \frac{\jmath}{2}\beta_2 \frac{\partial^2 A}{\partial \tau^2}
- \jmath \gamma |A|^2 A + W(t,z),
\end{align}
where $t$ and $z$ denote continuous time and propagation distance, respectively, $\tau=t-\beta_1 z$ is the retarded time frame, $\alpha$ denotes fiber attenuation, $\beta_1$ and $\beta_2$ are dispersion coefficients, $\gamma$ is the Kerr nonlinearity coefficient, and $W(\cdot)$ models amplified spontaneous emission noise.

Following the standard RP analysis \cite{mecozzi2012nonlinear,dar2013properties}, this leads to the discrete-time $K$-user WDM interference channel
\begin{align}\label{eq:Kuser-full}
Y_{k}[i] &= X_{k}[i]+N_{k}[i]+ \jmath \gamma \sum_{t,l,m} \sum_{\substack{w=1\\w\neq k}}^{K} S_{k,w}^{l,m,t} X_k[i-t] \notag\\
&\phantom{wwwwwwwwwwwwww} \times X_w[i-l]X_w[i-m]^* ,
\end{align}
where $X_k[i]$ and $Y_k[i]$ denote the discrete-time channel input and output of user (wavelength) $k$ at symbol time $i$, and $N_k[i]$ are i.i.d.\ circularly symmetric complex Gaussian noise samples with variance $2\sigma_k^2$. The indices $l,m,t$ represent discrete memory offsets due to dispersion and nonlinear interactions, and the coefficients $S_{k,w}^{l,m,t}$ are fixed RP kernels determined by the fiber parameters and span configuration~\cite{dar2013properties}. The Gaussian noise model is standard in coherent optical systems after dispersion compensation and matched filtering, arising from the accumulation of many independent spontaneous emission contributions~\cite{essiambre2010capacity}. At very high signal powers, nonlinear signal-noise interactions may lead to deviations from Gaussianity, which are not captured by the first-order RP model here.

Each user’s codewords must satisfy an \emph{average} input power constraint specified as
\begin{align}\label{eq:avgpower}
\frac{1}{n}\sum_{i=1}^{n}\mathbb{E}[|X_k[i]|^{2}] \le P_k, \qquad k=1,2,\dots,K.
\end{align}
This is the expected-power formulation of the standard blockwise power constraint $\frac{1}{n}\sum_{i=1}^n |x_k[i](m_k)|^2 \le P_k$ imposed on each codeword, where the expectation is taken with respect to the distribution of the message $M_k$ and the randomness in the codebook generation (the codeword being $X_k[i](M_k)$). It replaces the per-symbol peak constraint in~\cite{ramachandran2023capacity} and reflects the physically relevant operating condition in WDM systems.

Perturbative analyses show that the dominant nonlinear interference terms correspond to two-pulse collisions with $t=0$ and $l=m$ \cite{dar2016pulse,ramachandran2023capacity}. Accordingly, we truncate \eqref{eq:Kuser-full} to retain only the coefficients $S_{k,w}^{m,m,0}$ and restrict attention to a finite memory window $m\in[-M,M]$, yielding the working discrete-time channel model
\cite[eq.~(48)]{taghavi2006multiuser}, \cite[eq.~(8)]{dar2013properties},
\cite{ramachandran2023capacity}
\begin{align} \label{eq:Kuserapprox}
Y_{k}[i] &\approx X_{k}[i] \!+\! N_{k}[i]
\!+\! \jmath X_{k}[i] \!\!\! \sum_{m=-M}^{M} \!\!\! c_{k,w}^m
    \!\sum_{\substack{w=1\\w\ne k}}^{K} \!|X_w[i\!-\!m]|^{2}\!,
\end{align}
for $i=1,2,\dots,n,$ with $c_{k,w}^m := \gamma S_{k,w}^{m,m,0}$. In typical dispersion-uncompensated WDM systems, the dominant RP kernels $c_{k,w}^m$ associated with two-pulse XPM have been consistently observed to be non-negative under practical system parameters in prior analytical, numerical, and experimental studies \cite{dar2013properties,secondini2013achievable,dar2016pulse,ramachandran2023capacity}.
This non-negativity reflects the coherent phase shift induced by XPM, which is proportional to the instantaneous power of the interfering channels. For our simulation parameters in Table~\ref{table:param}, a numerical evaluation confirms $c_{k,w}^m \ge 0$ for all $w \neq k, m$, as illustrated in Fig.~\ref{fig:coeffs} ($M$ is taken as $10$ since the coefficients are negligible thereafter). 

\begin{table}[]
\def\arraystretch{1}%
\caption{Model Parameters}
\centering
\footnotesize
\begin{tabular}{c c}
\toprule
Parameter & Value \\ 
\midrule
Memory Length $M$ (eq. \eqref{eq:Kuserapprox}) & 10\\
Number of WDM users & 3-20 \\
Link Distance & $120 \: \textrm{km}$ \\
Nonlinearity parameter $\gamma$ & $1.2 \: \textrm{W}^{-1} \textrm{km}^{-1}$ \\
Signalling Rate & $64 \: \textrm{Gbaud}$ \\
Fiber attenuation $\alpha$ & $0.2 \: \textrm{dB/km}$ \\
Group velocity dispersion $\beta_2$ & $-21.7 \: \textrm{ps\textsuperscript{2}/km}$ \\
\bottomrule
\end{tabular}
\label{table:param} 
\end{table}
\normalsize
\begin{figure}[!b]
\vspace{-3cm}
\begin{center}
\scalebox{0.7}{
%
%
\definecolor{mycolor1}{rgb}{0.00000,0.44700,0.74100}%
\definecolor{mycolor2}{rgb}{0.85000,0.32500,0.09800}%
\begin{tikzpicture}

\begin{axis}[%
every axis/.append style={font=\large},
width=1.05\columnwidth,
height=0.8\columnwidth,
scale only axis,
xmin=0,
xmax=10,
xlabel={$m$},
ymin=0,
ymax=0.28,
ytick={0,0.05,0.1,0.15,0.2,0.25},
yticklabels={0,0.05,0.1,0.15,0.2,0.25},
ylabel={$\text{Coefficient} \:\: c_{k,w}^m$},
ylabel near ticks,
xlabel near ticks,
axis background/.style={fill=white},
legend style={
legend pos=north east,legend cell align=left,row sep=-0.5ex,grid style={dashed},font=\large
}
]

\addplot [color=mycolor2, mark = square*, mark size=2, line width=1.0pt]
  table[row sep=crcr, y expr=\thisrowno{1}*0.9]{%
0	0.269136\\
1	0.131544\\
2	0.103656000000001\\
3	0.0764399999999998\\
4	0.0624959999999994\\
5	0.0505680000000002\\
6	0.0376320000000003\\
7	0.0238560000000003\\
8	0.0119279999999993\\
9	0.00033600000000078\\
10	8.39999999993069e-05\\
};
\addlegendentry{$c_{1,2}^m$}

\addplot [color=teal, mark = triangle*, mark size=2.5, line width=1.0pt]
  table[row sep=crcr, y expr=\thisrowno{1}*0.9]{%
0	0.198240000000001\\
1	0.1055599999999992\\
2	0.0792400000000006\\
3	0.05846000000000006\\
4	0.0496399999999993\\
5	0.0389199999999996\\
6	0.02488\\
7	0.0134000000000001\\
8	0.00155999999999956\\
9	5.99999999995049e-05\\
10	5.99999999906231e-07\\
};
\addlegendentry{$c_{1,3}^m$}

\end{axis}

\begin{axis}[%
width=5.833in,
height=4.375in,
at={(0in,0in)},
scale only axis,
xmin=0,
xmax=1,
ymin=0,
ymax=1,
axis line style={draw=none},
ticks=none,
axis x line*=bottom,
axis y line*=left
]
\end{axis}
\end{tikzpicture}%
}
\end{center}
\caption{Channel coefficients $c_{1,2}^m, c_{1,3}^m$ involved in the model \eqref{eq:Kuserapprox} for user-$1$ being the channel of interest. Since the coefficients are symmetric about $m=0$, only the positive time indices are shown.}
\label{fig:coeffs}
\end{figure}

Model~\eqref{eq:Kuserapprox} has been widely used to describe XPM in multiuser optical systems and has been validated against split-step Fourier simulations over practical launch power ranges, while remaining analytically tractable
\cite{secondini2013achievable,dar2016pulse,ramachandran2023capacity}. As with all first-order RP models, it is not strictly energy preserving and may overestimate rates in the asymptotic high-power regime \cite{keykhosravi2019accuracy,vannucci2002rp}; this scope is discussed further in Remark~\ref{rem:prelog}.

\section{Main Results}
Before presenting the main results, we briefly review the standard \emph{treat-interference-as-noise} (TIN) baseline for \eqref{eq:Kuserapprox}. Under TIN, user~$k$ decodes only its own message, treating the signals of all other users as random interference. The achievable rate is thus given by the mutual information between the input and output of user~$k$ under the induced channel law~\cite[eq.~(6.2)]{el2011network}. The resulting rates typically exhibit a pronounced peak, often referred to as the \emph{nonlinear Shannon limit}~\cite{mitra2001nonlinear}.

In addition to the TIN baseline, Section~V also considers a successive decoding--equalization (SDE) benchmark tailored to the RP channel model. This scheme enables simultaneous transmission by successively decoding users and equalizing the corresponding nonlinear distortion, and serves as a multi-user comparison to our inner bounds proposed in the sequel.

\subsection{Capacity Region Outer Bounds and Single-user Capacity in the limit of high SNR} \label{sec:Kuser}
In this section, we characterize the set of achievable rate tuples $(R_1,R_2,\ldots,R_K)$ for the $K$-user channel shown in Fig.~\ref{fig:ICmodel} and represented by~\eqref{eq:Kuserapprox}. Before formalizing our capacity results, we first introduce some basic information-theoretic definitions that will be useful in the sequel.  

An $(n,2^{nR_1},2^{nR_2},\ldots,2^{nR_K})$ code for this setting is specified by the following components:  
\begin{itemize}
    \item $K$ message sets $\{1,2,\ldots,2^{nR_k}\} \triangleq \{1:2^{nR_k}\}$ for each $k \in \{1,2,\ldots,K\} \triangleq \{1:K\}$,  
    \item $K$ encoders $E_k$ mapping a message $M_k \in \{1:2^{nR_k}\}$ to a codeword $X_k^n(M_k)$, subject to power constraints $\frac{1}{n} \sum_{i=1}^n \mathbb{E}\!\left[|X_{k}[i]|^2\right] \leq P_k$ for each $k \in \{1:K\}$, and  
    \item $K$ decoders $D_k$ producing an estimate $\hat{M}_k$ from the received sequence $Y_k^n$ for each $k \in \{1:K\}$.  
\end{itemize}

Each message $M_k$ is assumed to be independently and uniformly distributed over its alphabet $\{1,2,\ldots,2^{nR_k}\}$. The overall error probability is given by  
\begin{align} \label{eq:errorprob}
P_e \triangleq \Pr\big( (\hat{M}_1(Y_1^n),\ldots,\hat{M}_K(Y_K^n)) \neq (M_1,\ldots,M_K) \big).
\end{align}

A rate vector $(R_1,R_2,\ldots,R_K)$ is called \emph{achievable} if there exists a sequence of $(n,2^{nR_1},2^{nR_2},\ldots,2^{nR_K})$ codes for which $\lim_{n \to \infty} P_e = 0$. The $K$-user capacity region $\mathcal{C}_{K\textrm{-cap-reg}}(P_1,P_2,\ldots,P_K)$ is then defined as the closure of the set of all achievable rate tuples (the explicit power dependence of $\mathcal{C}_{K\textrm{-cap-reg}}$ will be suppressed in the sequel for brevity). Finally, the single-user capacity of user-$k$ is  
\[
C_k \triangleq \sup \{R_k \;|\; (R_1,R_2,\ldots,R_K) \in \mathcal{C}_{K\textrm{-cap-reg}}\}.
\]
\noindent \paragraph*{High-SNR regime}
We say the channel operates in the \emph{high--SNR regime} if the transmit powers satisfy
\[
\min_{1\le k\le K} P_k \;\longrightarrow\; \infty .
\] Equivalently, this corresponds to the regime in which the blockwise lower envelope of the instantaneous signal-to-noise ratios $\gamma_{k,i} = \frac{\mathbb{E}[|X_k[i]|^2]}{2\sigma_k^2}
\Big( 1 + \bigl(\sum_{m}  \sum_{w\neq k} c_{k,w}^m |x_w[i-m]|^2\bigr)^2 \Big)$ satisfies $\inf_i \gamma_{k,i} \longrightarrow \infty \quad \text{for all } k$.
All limits and extremal characterizations in the current Section~\ref{sec:Kuser} are taken with respect to this scaling. Let $\mathcal{C}_{K\textrm{-cap-reg}}(P_1,P_2,\dots,P_K)$ denote the $K$--user capacity region for given average powers $\mathbf P=(P_1,P_2,\dots,P_K)$. The \emph{high--SNR capacity region} is the Kuratowski set limit (see \cite[Ch.~4]{rockafellar1998variational}) as the average powers grow without bound:  
\begin{align}
\overline{\mathcal{C}}_{K\textrm{-cap-reg}}^{\mathrm{high\!-\!SNR}}
&\!:=\! \limsup_{\min_k P_k\to\infty} \mathcal{C}_{K\textrm{-cap-reg}}(P_1,P_2,\dots,P_K) \notag\\
&\!=\! \bigcap_{P_0>0}\ \overline{\bigcup_{\min_k P_k\ge P_0}\mathcal{C}_{K\textrm{-cap-reg}}(P_1,P_2,\dots,P_K)},\\
\underline{\mathcal{C}}_{K\textrm{-cap-reg}}^{\mathrm{high\!-\!SNR}}
&\!:=\! \liminf_{\min_k P_k\to\infty} \mathcal{C}_{K\textrm{-cap-reg}}(P_1,P_2,\dots,P_K)  \notag\\
&\!=\! \bigcup_{P_0>0}\ \overline{\bigcap_{\min_k P_k\ge P_0}\mathcal{C}_{K\textrm{-cap-reg}}(P_1,P_2,\dots,P_K)}.
\end{align}
Here the notation $\overline{S}$ stands for the Euclidean closure of a set $S$.  
Under the high-SNR scaling $P_k \to \infty$ for all $k$, $\{\mathcal{C}_{K\textrm{-cap-reg}}(\mathbf P)\}$ is monotone with respect to the operation of set inclusion, in the sense that
\[
\mathbf P \preceq \mathbf P' \;\; \Rightarrow \;\;
\mathcal{C}_{K\textrm{-cap-reg}}(\mathbf P) \subseteq \mathcal{C}_{K\textrm{-cap-reg}}(\mathbf P'),
\]
where $\preceq$ denotes component-wise ordering. As a result, the Kuratowski $\liminf$ and $\limsup$ of $\mathcal{C}_{K-\textrm{cap-reg}}(\mathbf P)$ coincide and the setwise limit exists. When these two limits coincide, we denote their common value as follows:
\[
\mathcal{C}_{K\textrm{-cap-reg}}^{\mathrm{high\!-\!SNR}}
:= \overline{\mathcal{C}}_{K\textrm{-cap-reg}}^{\mathrm{high\!-\!SNR}}
= \underline{\mathcal{C}}_{K\textrm{-cap-reg}}^{\mathrm{high\!-\!SNR}} .
\]
The \emph{high--SNR single--user capacity} for user $k$ is then
\[
C_k^{\mathrm{high\!-\!SNR}}
\;\triangleq\;
\sup \bigl\{ R_k \,\big|\, (R_1,R_2,\dots,R_K)\in
\mathcal{C}_{K\textrm{-cap-reg}}^{\mathrm{high\!-\!SNR}} \bigr\}.
\]

\begin{theorem} \label{thm:OBgenieK}
The single-user capacity $C_k$ of the interference channel in \eqref{eq:Kuserapprox} at high SNR is characterized by
\begin{align} \label{eq:capLBUB}
C_k^{\mathrm{high\!-\!SNR}} &= \log\biggl(\frac{P_{k}}{2\sigma_k^2}\biggl[1+\biggl(\sum_{\substack{w=1\\w \neq k}}^K P_w \sum_{m=-M}^M c_{k,w}^m\biggr)^2\biggr]\! \biggr),
\end{align}
for all $k \in \{1,2,\ldots,K\}$. The collection of the single-user high-SNR upper bounds on $C_k$ in \eqref{eq:capLBUB} for all $k \in \{1,2,\ldots,K\}$ constitute the following outer bound:
\begin{align} \label{eq:OBregion}
\mathcal{C}_{K\textrm{-cap-reg}}^{\mathrm{high\!-\!SNR}} 
   &\subseteq 
   \bigg\{ (R_1,R_2,\ldots,R_K) \;\bigg|\;
   R_k \leq C_k^{\mathrm{high\!-\!SNR}}, \notag\\
   &\phantom{wwwwwwwwwwwww}
   \;\forall k \in \{1,2,\ldots,K\}
   \bigg\}.
\end{align}
\end{theorem}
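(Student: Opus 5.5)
We prove the two inequalities in \eqref{eq:capLBUB} separately, in both cases up to $o(1)$ terms that vanish in the high-SNR limit, and then obtain \eqref{eq:OBregion} at once, since every point of the region satisfies $R_k\le C_k$.

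\emph{Converse (genie-aided).} Fix user $k$ and give receiver $k$ the interfering codewords $\{X_w^n\}_{w\neq k}$ as genie side information. By Fano's inequality and the independence of the messages,
\[
nR_k \le I(X_k^n;Y_k^n,\{X_w^n\}_{w\ne k})+n\epsilon_n = I(X_k^n;Y_k^n\mid \{X_w^n\}_{w\ne k})+n\epsilon_n .
\]
Conditioned on the interferers, the channel \eqref{eq:Kuserapprox} is memoryless in $X_k$:
\[
Y_k[i]=X_k[i](1+\jmath\Phi_i)+N_k[i],\qquad \Phi_i:=\sum_{m}\sum_{w\ne k}c_{k,w}^m|X_w[i-m]|^2 .
\]
This is a scalar Gaussian channel with known gain $|1+\jmath\Phi_i|^2=1+\Phi_i^2$. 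Hence
\[
I(X_k^n;Y_k^n\mid\cdot)\le\sum_i \mathbb{E}\log\Bigl(1+\tfrac{p_i(1+\Phi_i^2)}{2\sigma_k^2}\Bigr),
\qquad p_i=\mathbb{E}[|X_k[i]|^2\mid\cdot].
\]
The interferer powers are chosen independently of user $k$, so the bound must be maximized jointly over the power allocations $p_i$ (subject to $\frac1n\sum p_i\le P_k$) and over the interferer laws (subject to the average constraints $P_w$). The outer bound is the worst case, i.e.\ the supremum over these laws.

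\emph{Main obstacle: the extremal problem.} Since $\Phi_i^2$ is a convex quadratic in the powers $|X_w|^2$, maximizing over the interferers alone would favour peaky (flash) signalling. This would make the bound unbounded, which is where the non-convexity comes in. The key is therefore how the high-SNR limit interacts with the logarithm. At high SNR we have
\[
\log(1+\mathrm{snr}\cdot g)=\log \mathrm{snr}+\log g+o(1),
\]
so the objective separates into $\frac1n\sum_i\mathbb{E}\log p_i+\frac1n\sum_i\mathbb{E}\log(1+\Phi_i^2)$.

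The first term is bounded by $\log P_k$ via Jensen's inequality (concavity of $\log$). For the second term, we must show that $\frac1n\sum_i \mathbb{E}\log(1+\Phi_i^2)\le\log(1+\bar\Phi^2)$ with $\bar\Phi=\sum_{w\ne k}P_w\sum_m c_{k,w}^m$. The function $\phi\mapsto\log(1+\phi^2)$ is concave for $|\phi|\ge1$, which is exactly the high-SNR regime where the interferer powers grow. Applying Jensen's inequality there gives $\mathbb{E}\log(1+\Phi_i^2)\le\log(1+(\mathbb{E}\Phi_i)^2)$, up to $o(1)$ corrections from the region $|\Phi_i|<1$. Those corrections are controlled by noting that this region contributes at most $\log 2$ times a probability that vanishes. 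We then average over $i$ and use concavity once more, together with linearity:
\[
\frac1n\sum_i\mathbb{E}\Phi_i=\sum_{w\ne k}\sum_m c_{k,w}^m\,\frac1n\sum_i\mathbb{E}|X_w[i-m]|^2\le \bar\Phi .
\]
Boundary indices are absorbed into the $o(1)$ term as $n\to\infty$. Making this Jensen step rigorous, with care about the sign of the $c$'s (use $|\Phi_i|\le\sum|c|\cdots$ if needed), is the step I expect to be the delicate one.

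\emph{Achievability.} User $k$ uses a Gaussian codebook of power $P_k$. Every interferer sends constant-amplitude (e.g.\ PSK) symbols with $|X_w[i]|^2=P_w$ deterministically and carries a negligible, known pattern, so that it is effectively at rate zero. Then $\Phi_i\equiv\bar\Phi$ is deterministic and user $k$ sees a Gaussian channel with gain $1+\bar\Phi^2$. This achieves $\log(1+P_k(1+\bar\Phi^2)/2\sigma_k^2)$, which matches the converse at high SNR. Because this single-user rate is attained, the supremum defining $C_k^{\mathrm{high-SNR}}$ equals \eqref{eq:capLBUB}, and \eqref{eq:OBregion} follows.
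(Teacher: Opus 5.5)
\textbf{Your route versus the paper's.} Your diagnosis of the obstacle is correct, and it locates exactly where the paper's own route fails. The paper applies Jensen to reach $\log\bigl(1+\frac1n\sum_i\Phi_i^2\bigr)$ and then maximizes the convex quadratic $\sum_i\Phi_i^2$ at a one-sparse vertex (Proposition~\ref{prop:extreme}). Putting every interferer's power $nP_w$ on one common interior symbol already gives $\frac1n\sum_i\Phi_i^2=n\sum_m\bigl(\sum_{w\neq k}c_{k,w}^mP_w\bigr)^2$, which grows with $n$. Constant amplitude gives only about $\bar\Phi^2$, so the paper's claim that both give the same value is false. Applying Jensen to $\phi\mapsto\log(1+\phi^2)$ in the variable $\Phi_i$, which is linear in the powers, is the right idea.

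\textbf{Gap in the key step.} Your justification of $\frac1n\sum_i\mathbb{E}\log(1+\Phi_i^2)\le\log(1+\bar\Phi^2)+o(1)$ does not work. The probability of $\{|\Phi_i|<1\}$ does not vanish. The interferers choose the law of $\Phi_i$ subject only to an average constraint, and an on--off pattern with duty cycle $\delta$ makes $\Phi_i=0$ on roughly a fraction $1-\delta$ of the symbols at every SNR. Jensen restricted to $\{|\Phi_i|\ge1\}$ bounds by a conditional mean, which can be much larger than $\bar\Phi$ (e.g.\ $\bar\Phi/\delta$). Your second averaging over $i$ runs into the same convex region, since $\mathbb{E}\Phi_i$ can be small for most $i$.

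The missing ingredient is the concave envelope $\hat g$ of $g(\phi)=\log(1+\phi^2)$ on $[0,\infty)$. It is the tangent line from the origin on $[0,\phi_0]$ and equals $g$ on $[\phi_0,\infty)$, where $\phi_0\approx1.98$ solves $\ln(1+\phi_0^2)=2\phi_0^2/(1+\phi_0^2)$. Since $\hat g$ is concave and nondecreasing, $\frac1n\sum_i\mathbb{E}\,g(\Phi_i)\le\hat g\bigl(\frac1n\sum_i\mathbb{E}\Phi_i\bigr)\le\hat g(\bar\Phi)=g(\bar\Phi)$ once $\bar\Phi\ge\phi_0$, with no error term. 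This uses $\Phi_i\ge0$ and $\frac1n\sum_i\mathbb{E}\Phi_i\le\bar\Phi$, and both require $c_{k,w}^m\ge0$.

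\textbf{The sign condition cannot be avoided.} Bounding $|\Phi_i|\le\sum|c|\cdots$ replaces $\bar\Phi$ by $\sum_{w\neq k}P_w\sum_m|c_{k,w}^m|$, which no longer matches \eqref{eq:capLBUB}. In fact the converse is false for mixed signs. Take $K=2$, $c_{k,w}^0=1$, $c_{k,w}^{\pm1}=-\tfrac12$, so that $\bar\Phi=0$. An interferer alternating $|x_w[i]|^2\in\{2P_w,0\}$ gives $|\Phi_i|=2P_w$ at every interior symbol. User $k$ then achieves $\log\bigl(1+\frac{P_k}{2\sigma_k^2}(1+4P_w^2)\bigr)$, far above \eqref{eq:capLBUB}. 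You must therefore assume $c_{k,w}^m\ge0$, as in Proposition~\ref{thm:OBgenieKfinite}; the statement as given, with no sign condition, cannot be proved.

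\textbf{Non-uniform decoupling.} The step $\log(1+sg)=\log s+\log g+o(1)$ is not uniform over user $k$'s own allocation. The error $\log(1+1/(sg))$ blows up wherever $p_i$ is small, and user $k$ is free to choose such $p_i$. Use instead the exact inequality $1+sg\le g(1+s)$ for $g\ge1$, i.e.\ $\log\bigl(1+\frac{p_i}{2\sigma_k^2}(1+\Phi_i^2)\bigr)\le\log\bigl(1+\frac{p_i}{2\sigma_k^2}\bigr)+\log(1+\Phi_i^2)$. Combined with Jensen on the first term and the envelope bound on the second, this gives $C_k\le\log\bigl(1+\frac{P_k}{2\sigma_k^2}\bigr)+\log(1+\bar\Phi^2)$. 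That is within $\log(1+2\sigma_k^2/P_k)$ of your achievable rate. With these repairs, and under $c_{k,w}^m\ge0$, your converse and achievability together establish \eqref{eq:capLBUB} in the asymptotic sense.
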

\begin{proof}
See Section~\ref{app:genieK}.
\end{proof}
In addition to providing an upper bound on the rate of user~$k$, Theorem~\ref{thm:OBgenieK} also characterizes the extremal interferer signaling (viz. users $w\neq k$) that maximizes this rate. Specifically, Section~\ref{app:genieK} shows that constant-amplitude signaling by users $w\neq k$ is optimal. This characterization naturally motivates the achievable strategies discussed in Section~\ref{sec:ach}.

\begin{remark} 
\textbf{Interpretation of Achievability and Input Distributions:} Throughout this paper, the term \emph{achievable} refers to rates attained within the first-order RP channel model under average power constraints. As will be clear from the proof of Theorem~\ref{thm:OBgenieK}, Gaussian inputs arise as the optimal choice for the user of interest since the genie-aided outer bound establishes that constant-amplitude interferer signaling is extremal. Under this provably optimal interferer structure, the effective channel reduces to a linear additive-noise model within the RP approximation, for which Gaussian inputs are optimal in the standard information-theoretic sense. The resulting rate expressions thus characterize the single-user capacity corner points of the RP channel under average power constraints. 
\end{remark}

\vspace{-5mm}

\subsection{Beyond high-SNR} \label{sec:Kuserfinite}
Theorem~\ref{thm:OBgenieK} provides a high–SNR characterization without placing any sign constraints on the perturbative coefficients. Empirically however, the dominant RP coefficients $c_{k,w}^m$ in \eqref{eq:Kuserapprox} are always observed to be nonnegative for typical WDM configurations, which is extensively validated in e.g.,~\cite{dar2013properties,secondini2013achievable,ramachandran2023capacity}.
This empirically validated condition can be leveraged to strengthen our analysis and obtain a \emph{finite–SNR} extremality result that leads to the following capacity characterization.

\begin{prop}[Capacity characterization under non-negative perturbative kernels] \label{thm:OBgenieKfinite}
The single-user capacity $C_k$ of the interference channel \eqref{eq:Kuserapprox} under $c_{k,w}^m \geq 0$ is characterized by\footnote{This characterization holds for the truncated first‑order RP model \eqref{eq:Kuserapprox} under the condition $c_{k,w}^m \geq 0$ for all $w \neq k$, $m \in [-M,M]$. It is not claimed to hold for the full NLSE channel or when higher‑order RP terms are included.}
\begin{align} \label{eq:capLBUBfinite}
C_k &= \log\biggl(1+\frac{P_{k}}{2\sigma_k^2}\biggl[1+\biggl(\sum_{\substack{w=1\\w \neq k}}^K P_w \sum_{m=-M}^M c_{k,w}^m\biggr)^2\biggr] \biggr),
\end{align}
for all $k \in \{1,2,\ldots,K\}$. The collection of the single-user upper bounds on $C_k$ in \eqref{eq:capLBUBfinite} for all $k \in \{1,2,\ldots,K\}$ constitute the following outer bound:
\begin{align} \label{eq:OBregionfinite}
\mathcal{C}_{K\textrm{-cap-reg}} 
   &\subseteq 
   \bigg\{\! (R_1,R_2,\cdot\cdot,R_K) \bigg|
   R_k \leq C_k \:\forall k \in \{1\!:\!K\}
   \!\!\bigg\}\!.
\end{align}
\end{prop}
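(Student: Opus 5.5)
The plan is to mirror the genie-aided argument behind Theorem~\ref{thm:OBgenieK}, using $c_{k,w}^m\ge 0$ to replace the high-SNR approximations by exact finite-SNR bounds.

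\textbf{Genie step.} For the converse on $C_k$, give receiver $k$ the interferer sequences $\{X_w^n\}_{w\neq k}$ as side information. Conditioned on these, the channel \eqref{eq:Kuserapprox} is a per-symbol linear Gaussian channel
\[
Y_k[i]=X_k[i]\bigl(1+\jmath\,\Theta_k[i]\bigr)+N_k[i],
\qquad
\Theta_k[i]=\sum_{m}\sum_{w\neq k}c_{k,w}^m|X_w[i-m]|^2 .
\]
Its gain satisfies $|1+\jmath\Theta_k[i]|^2=1+\Theta_k[i]^2$.

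\textbf{Per-letter bound.} By Fano's inequality and the standard single-letterization,
\[
nR_k\le \sum_i I(X_k[i];Y_k[i]\mid X_{\neq k}^n)+n\epsilon_n
\le \sum_i \mathbb{E}\Bigl[\log\Bigl(1+\tfrac{p_{k,i}(1+\Theta_k[i]^2)}{2\sigma_k^2}\Bigr)\Bigr]+n\epsilon_n .
\]
Here $p_{k,i}$ is the conditional power of $X_k[i]$ given the side information. Since $X_k^n$ is independent of the interferers, $p_{k,i}=\mathbb{E}|X_k[i]|^2$. The inequality holds because Gaussian inputs maximize conditional mutual information for a known gain.

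\textbf{Extremal step (the main obstacle).} We must maximize this expression jointly over $\{p_{k,i}\}$ and the interferer power profiles, subject to the average constraints \eqref{eq:avgpower}. The function $\log(1+a(1+\theta^2))$ is not concave in $\theta$, so Jensen's inequality cannot be applied directly. This is where nonnegativity is used, and we proceed in three moves.
\begin{itemize}
\item Because $c_{k,w}^m\ge0$, $\Theta_k[i]$ is a nonnegative linear functional of the nonnegative powers $u_w[j]=|X_w[j]|^2$. Its block average therefore equals exactly $\sum_w \bigl(\tfrac1n\sum_j u_w[j]\bigr)\sum_m c_{k,w}^m$ (up to edge effects of order $M/n$) and is at most $\Theta^\star=\sum_{w\neq k}P_w\sum_m c_{k,w}^m$ in expectation.
\item Concavity of $\log$ in the SNR, with Jensen over time and over the interferer randomness, bounds the sum by $n\log\bigl(1+\tfrac{1}{2\sigma_k^2}\,\tfrac1n\sum_i \mathbb{E}[p_{k,i}(1+\Theta_k[i]^2)]\bigr)$.
\item The remaining difficulty is the second moment $\mathbb{E}[\Theta_k^2]$, which fluctuating (peaky) interferers could inflate.
\end{itemize}

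\textbf{Handling the second moment.} I would try to argue that the rate of user $k$ counted in the corner point is only that attained while the interferers obey their own average constraints. Moreover, the genie bound is in fact tight only under constant-amplitude interferers, where $\Theta_k[i]\equiv\Theta^\star$. If this cannot be made rigorous, a fallback is a Lagrangian/KKT argument over the nonnegative orthant. Nonnegativity makes the objective monotone in each $u_w[j]$, so one would show that, at any stationary point under the linear constraints, the equal-power (constant-amplitude) profile $u_w[j]=P_w$ is extremal. This step is where I expect the real work and possible gaps. It yields $R_k\le\log\bigl(1+\tfrac{P_k}{2\sigma_k^2}(1+(\Theta^\star)^2)\bigr)$, and the outer region \eqref{eq:OBregionfinite} follows by taking this bound for every $k$.

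\textbf{Achievability of the corner point.} Let interferers $w\neq k$ send deterministic constant-amplitude, known-phase sequences with $|X_w[i]|^2=P_w$. They then carry zero rate, which is permitted for the corner point. Receiver $k$ then sees the known deterministic gain $1+\jmath\Theta^\star$, and a Gaussian codebook of power $P_k$ achieves \eqref{eq:capLBUBfinite}, matching the upper bound.
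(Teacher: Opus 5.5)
\textbf{Verdict: genuine gap.} Your genie step and the constant-amplitude achievability match the paper's argument. The extremal step you flag as the main obstacle is a genuine gap, and it cannot be filled. In model \eqref{eq:Kuserapprox}, with only $c_{k,w}^m\ge0$ and the paper's definition of $C_k$, the upper-bound half of \eqref{eq:capLBUBfinite} is false at finite SNR. The outer bound \eqref{eq:OBregionfinite} fails with it.

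\textbf{Counterexample.} Zero-rate deterministic interferer sequences are just as admissible in the definition of $C_k$ as the constant-amplitude ones you use for achievability, and receiver~$k$ knows them. Take $K=2$, $k=1$, $M=0$ (for general $M$, use alternating blocks of length $L\gg M$). Put $s=P_1/(2\sigma_1^2)$ and $\theta=\Theta^\star=P_2\sum_m c_{1,2}^m$.
\begin{itemize}
\item User~2 sends $|x_2[i]|^2=2P_2$ on half of the symbols and $0$ on the other half.
\item User~1 uses a Gaussian codebook of power $2P_1$ on the first half only.
\end{itemize}
Both average constraints hold, and user~1 sees a known-gain AWGN channel. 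Hence $R_1=\tfrac12\log\bigl(1+2s(1+4\theta^2)\bigr)$ is achievable. This exceeds $\log\bigl(1+s(1+\theta^2)\bigr)$ whenever $6\theta^2>s(1+\theta^2)^2$; for example, $s=\theta=1$ gives $\tfrac12\log 11>\log 3$.

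Even with user~$k$'s power kept uniform, the formula fails. The map $u\mapsto\log\bigl(1+s(1+\theta^2u^2)\bigr)$ is strictly convex at $u=1$ whenever $\theta^2<1+1/s$, in particular whenever $\Theta^\star\le1$. So alternating all interferer powers between $P_w(1\pm\delta)$ on long blocks already beats the formula.

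\textbf{Why your step fails.} This is exactly your second-moment problem. After your Jensen step, $\frac1n\sum_i p_{k,i}\Theta_k[i]^2$ is not controlled by $P_k(\Theta^\star)^2$, because $\Theta_k$ may be peaky and user~$k$ may align its power with the peaks. Neither remedy you suggest helps. The on--off interferer does obey its own average constraint. The genie bound is tight for \emph{every} deterministic interferer sequence, not only the constant one.

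\textbf{The paper's route has the same hole.} Your KKT fallback is the paper's own route (Proposition~\ref{prop:finiteSNR-extreme}, Appendix~\ref{proof:prop4}), and it breaks there too. Coordinatewise monotonicity only makes the sum constraints tight. Since $F$ is not concave, the KKT conditions are necessary but not sufficient for a global maximum. The paper extracts a one-sparse optimizer and then asserts that the constant profile gives the same $F$ because $F$ depends only on total powers. But $F$ depends on the whole profile $(L_k\underline z)_i$. The choice $\underline z_w=nP_w\mathbf e_j$ gives $(L_k\underline z)_i=nP_w c_{k,w}^{i-j}$, so for uniform $P_{k,i}$ one gets $F=n\log(1+s)+O(\log n)$. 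That is below the constant-profile value for large $n$, so it is not even a maximizer.

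\textbf{What your argument does prove.} Once Jensen is applied to the right function, your first two bullets give $C_k=\hat g(P_k,\Theta^\star)$. Here $\hat g$ is the upper concave envelope on $\mathbb{R}_+^2$ of $g(p,\theta)=\log\bigl(1+\frac{p}{2\sigma_k^2}(1+\theta^2)\bigr)$.
\begin{itemize}
\item The converse uses that $\hat g$ is concave and nondecreasing, that $\frac1n\sum_i p_{k,i}\le P_k$, and that $\Theta_k[i]\ge0$ with $\frac1n\sum_i\mathbb{E}\,\Theta_k[i]\le\Theta^\star$. This is where $c_{k,w}^m\ge0$ is genuinely used.
\item Achievability follows by time sharing over long constant-amplitude blocks.
\end{itemize}
The stated formula is the special case $\hat g(P_k,\Theta^\star)=g(P_k,\Theta^\star)$. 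That requires an additional condition on $(P_k,\Theta^\star)$, and it fails in the examples above.
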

\begin{proof}
See Section~\ref{app:genieKfinite}.
\end{proof}

Equation~\eqref{eq:capLBUB} follows from \eqref{eq:capLBUBfinite} by the standard high-SNR approximation $\log(1+x)\approx\log x$, which removes the additive constant inside the logarithm.

\begin{remark}[Pre-log growth under the RP model] \label{rem:prelog}
In the symmetric scaling regime $P_1\asymp P_2\asymp\cdots\asymp P_K\asymp P$, the expression
in~\eqref{eq:capLBUBfinite} (or \eqref{eq:capLBUB}) grows as
\[
C_k
\sim \log\!\bigl(P_k P^2\bigr)
=3\log P+O(1),
\]
so the capacity pre-log can reach $3$ under the first-order RP model. Such superlinear pre-log factors have been reported previously for dispersionless single-channel RP models (e.g., \cite[Sec.~VI]{vannucci2002rp},\cite[Sec.~3.1 and 5]{keykhosravi2019accuracy}) and reflect the fact that first-order RP is not strictly energy preserving, leading to an artificial power gain when all channels scale simultaneously. Thus, our capacity characterizations should be interpreted within the RP approximation; more complete models, including signal-noise interactions and higher-order terms, are expected to moderate the asymptotic slope. Nevertheless, our numerical results in Section~\ref{sec:num} show that the derived bounds closely track achievable rates over the validated $0$-$5$ dBm regime.
\end{remark}

We emphasize that the effective gain term in~\eqref{eq:Kuserapprox} due to coherent XPM should not be interpreted as physical signal amplification. This model has also been validated against split-step Fourier simulations and experiments over practical launch-powers, capturing dominant XPM effects and performance trends \cite{mecozzi2002limits,johannisson2014modeling,keykhosravi2019accuracy,dar2013properties}. Accordingly, our results are best viewed as characterizing extremal signaling and interference-management mechanisms within the RP framework. 


\subsection{All-helps-one (AHO) Inner Bound} \label{sec:ach}
We have the following achievable region (inner bound) to the capacity region. Analogous results can be stated for the high-SNR regime, which we do not repeat here.
\begin{prop}[All-Helps-One Inner Bound] \label{prop:allhelpsone}
Consider the $K$-user interference channel described in~\eqref{eq:Kuserapprox}.  
\noindent For each $k$, let $\mathbf{r}^{(k)}=(r^{(k)}_1,r^{(k)}_2,\ldots,r^{(k)}_K)\in\mathbb{R}_+^K$ be the rate vector with
\[
r^{(k)}_j =
\begin{cases}
C_k, & j=k,\\[4pt]
0, & j\neq k.
\end{cases}
\]

\noindent Define the \emph{all-helps-one} region as the convex hull of these $K$ extreme points:
\begin{align} \label{eq:AHOregion}
\mathcal{R}_{\text{AHO}}
&\triangleq \mathrm{Conv}\big\{\mathbf{r}^{(1)},\mathbf{r}^{(2)},\ldots,\mathbf{r}^{(K)}\big\} \\
&= \Big\{ \mathbf{R}\in\mathbb{R}_+^K \;\Big|\; \mathbf{R}=\sum_{k=1}^K \lambda_k \mathbf{r}^{(k)},\;
\lambda_k\ge 0,\; \sum_{k=1}^K \lambda_k = 1 \Big\}. \nonumber
\end{align}

\noindent Then the region $\mathcal{R}_{\text{AHO}}$ is achievable, i.e.,
\begin{align}
\mathcal{C}_{K\textrm{-cap-reg}} \supseteq \mathcal{R}_{\text{AHO}}.
\end{align}

\noindent Equivalently, letting $\lambda_k$ denote the fraction of time during which user-$k$ is the ``helped'' user, the region can be written coordinate-wise as
\begin{align*}
R_k \le \lambda_k \, C_k, \quad k=1,2,\ldots,K,\quad
\lambda_k\ge 0,\;\sum_{k=1}^K \lambda_k = 1.
\end{align*}
\end{prop}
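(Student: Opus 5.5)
The plan is to prove achievability of $\mathcal{R}_{\text{AHO}}$ in two stages: first show that each corner point $\mathbf{r}^{(k)}$ is achievable, then extend to the convex hull by time sharing. For the corner points I will use the achievability half of Proposition~\ref{thm:OBgenieKfinite}, together with the extremal interferer structure identified in the proof of Theorem~\ref{thm:OBgenieK}. Fix $k$. Every user $w\neq k$ transmits no message ($r^{(k)}_w=0$) and sends the deterministic constant-amplitude sequence $X_w[i]=\sqrt{P_w}$ for all $i$. This meets \eqref{eq:avgpower} with equality.

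Under this choice, $\sum_{m=-M}^{M}\sum_{w\neq k}c_{k,w}^m|X_w[i-m]|^2=\sum_{w\neq k}P_w\sum_m c_{k,w}^m=:\theta_k$ is a known deterministic constant. Here I treat boundary symbols $i-m\notin\{1,\dots,n\}$ by letting the interferers also send $\sqrt{P_w}$ over a guard prefix/suffix of length $M$; its rate cost $2M/n$ vanishes. Channel \eqref{eq:Kuserapprox} for user $k$ then reduces to the memoryless linear channel $Y_k[i]=(1+\jmath\theta_k)X_k[i]+N_k[i]$. This channel has known gain $|1+\jmath\theta_k|^2=1+\theta_k^2$. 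A Gaussian codebook for user $k$ with power $P_k$ then achieves $\log\bigl(1+\tfrac{P_k}{2\sigma_k^2}(1+\theta_k^2)\bigr)=C_k$ by the standard point-to-point coding theorem. Decoders $D_w$ for $w\neq k$ trivially output their unique message, so the overall error probability \eqref{eq:errorprob} tends to zero. Hence $\mathbf{r}^{(k)}\in\mathcal{C}_{K\textrm{-cap-reg}}$. No sign condition is needed here; nonnegativity of $c_{k,w}^m$ matters only for the converse, which identifies this value as $C_k$.

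For the convex hull, take $\lambda_k\ge0$ with $\sum_k\lambda_k=1$ and a block length $n$. Split it into consecutive sub-blocks of lengths $n_k=\lfloor\lambda_k n\rfloor$, separated by guard intervals of length $2M$. During sub-block $k$ the corner-$k$ code of length $n_k$ is used: user $k$ sends its Gaussian codeword and the others send constants. The guards decouple the memory across sub-blocks, since the interference seen by user $k$ in its sub-block depends only on symbols within distance $M$, and those are all helper constants. The rate of user $k$ is $n_kC_k/n\to\lambda_kC_k$, and the error probability is bounded by a union bound over the $K$ sub-codes.

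The step I expect to need most care is the power constraint under time sharing, since each user spends different powers in different phases. User $k$ uses power $P_k$ in its own phase and, as a helper, the constant $P_k$ in the other phases. The per-user average is therefore exactly $P_k$ and \eqref{eq:avgpower} holds. Finally, the closure in the definition of the capacity region absorbs the rounding and guard losses, and the region is down-closed because rates can be reduced by discarding messages. This gives every $\mathbf{R}$ with $R_k\le\lambda_kC_k$, i.e.\ $\mathcal{R}_{\text{AHO}}\subseteq\mathcal{C}_{K\textrm{-cap-reg}}$.
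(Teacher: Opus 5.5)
Your proposal is correct and follows the same route as the paper: constant-amplitude helpers $X_w[i]=\sqrt{P_w}$ reduce user $k$'s channel to a memoryless AWGN channel with known gain $1+\jmath\theta_k$, a Gaussian codebook attains \eqref{eq:capLBUBfinite} (no sign condition is needed for this), and time sharing over the $K$ corner points gives the hull. Your additions make rigorous what the paper compresses into its footnote on the $O(M/n)$ boundary penalty and its power-accounting remark: the guard intervals between sub-blocks, the exact per-user power accounting (each user spends $P_k$ both when active and when helping), the union bound, and the closure/down-closure step.
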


\begin{remark}
AHO should be interpreted as an information-theoretic characterization of long-term rate trade-offs, rather than a specific transmission protocol. In practice, the associated time sharing may occur over longer timescales (than symbol-level coordination), such as transmission epochs, wavelength reconfiguration, or network-level scheduling. 
\end{remark}


\begin{proof}
The achievability proof for $\mathcal{R}_{\text{AHO}}$ proceeds as follows. Fix $k \in \{1,2,\ldots,K\}$ and suppose that all interferer transmissions are chosen as\footnote{The constant-amplitude interferer assignment in \eqref{eq:constamp} cannot be strictly enforced for the first and last $M$ symbols, incurring an $O(M/n)$ rate penalty that vanishes as $n\to\infty$, so the expression \eqref{eq:capLBUBfinite} holds as a capacity result.}
\begin{align}
|x_{w}[i-m]|^2 = P_w, 
\quad \forall \; w \neq k, \; \forall \; m, \; \text{and} \; \forall \; i > m, \label{eq:constamp}
\end{align}
i.e., users $w \neq k$ employ constant-amplitude signaling. This choice turns out to be optimal in terms of maximizing the achievable rate of user-$k$, since it precisely achieves the single-user upper bound \eqref{eq:capLBUBfinite} in Proposition~\ref{thm:OBgenieKfinite}. Indeed, under this condition, the channel reduces to the following memoryless single-user model (with time indices omitted for brevity):  
\begin{align}
Y_{k} &= X_{k}\biggl(1+j \sum_{\substack{w=1\\w \neq k}}^K P_w \sum_{m=-M}^M c_{k,w}^m \biggr)+N_{k}, \label{eq:shamai}
\end{align}
which is a complex AWGN channel with average input power constraint $P_k$, whose capacity is well known. Hence, the achievable rate is given by \eqref{eq:capLBUBfinite}. Repeating this construction for each user yields $K$ corner points of the achievable rate region. Their convex hull, obtained by time sharing, defines the \emph{all-helps-one} inner bound for \eqref{eq:Kuserapprox}. This construction should be viewed as an information-theoretic characterization of trade-offs under non-cooperating users, rather than a simultaneous multi-user coding scheme. A numerical comparison with TIN and the outer bound of Prop.~\ref{thm:OBgenieKfinite} is provided in Section~\ref{sec:num}.
\end{proof}

\begin{remark}
The analysis above employs continuous-valued (Gaussian) inputs for analytical tractability within the first-order RP model, whereas practical optical systems use discrete modulation formats. While \eqref{eq:capLBUBfinite} pertains to continuous signaling, the key qualitative conclusions, namely the extremality of constant-amplitude interferer signaling and its advantage over TIN, are expected to extend to sufficiently rich discrete constellations with similar amplitude statistics. 
\end{remark}

\begin{remark}[Power Accounting and TDMA Comparison]
Under symmetric operation with $K$ users, each user acts as the active user for a fraction $1/K$ of the time. Consequently, the throughput of the AHO scheme is the average symmetric rate $C_k/K$, where the factor $1/K$ accounts for the power expended while a user serves as a helper. Thus, the helper power is reflected through the averaging over time (see Fig.~\ref{fig:succ_vs_tin_vs_aho}).

For comparison, conventional time-division multiple access (TDMA) silences all interfering users, rendering the channel linear and yielding a symmetric rate of \smash{$\frac{1}{K}\log\!\left(1+\frac{P_k}{2\sigma_k^2}\right)$}. In contrast, AHO employs constant-amplitude helper signals whose XPM contributions increase the effective SNR of the active user, resulting in a symmetric throughput of $C_k/K$.
\end{remark}

\section{Proof of Theorem \ref{thm:OBgenieK}} \label{app:genieK}
We now establish the outer bound proof. 
The rate of user$-k$, $k \in \{1,2,\cdots,K\}$, can be upper bounded as follows:
\begin{align}
&nR_k = H(M_k) \notag\\
&\stackrel{(a)}= H(M_k|\{X_w^n | w \neq k\}) \notag\\
&= H(M_k|\{X_w^n | w \neq k\})-H(M_k|\{X_w^n | w \neq k\},Y_k^n) \notag\\
&\phantom{wwww}+H(M_k|\{X_w^n | w \neq k\},Y_k^n) \notag\\
&\stackrel{(b)}\leq I(M_k;Y_k^n|\{X_w^n | w \neq k\})+H(M_k|Y_k^n) \notag\\
&\stackrel{(c)}\leq I(M_k;Y_k^n|\{X_w^n | w \neq k\})+1+P_{e} nR_k  \notag\\
&\stackrel{(d)}= I(M_k;Y_k^n|\{X_w^n | w \neq k\})+n\epsilon_n  \notag\\
&\stackrel{(e)}\leq I(X_k^n;Y_k^n|\{X_w^n | w \neq k\})\!+\!n\epsilon_n \notag\\
&= h(Y_k^n|\{X_w^n | w \neq k\})-\!h(Y_k^n|X_k^n,\{X_w^n | w \neq k\})+\!n\epsilon_n \notag\\
&= h(Y_k^n|\{X_w^n | w \neq k\})-\!h({N_k}^n)+\!n\epsilon_n \notag\\
&\stackrel{(f)}\leq  \max_{\{x_{w}^n | w \neq k\}} \left[ h(Y_{k}^n|\{X_w^n=x_{w}^n | w \neq k\})\right] - h(N_{k}^n)+n\epsilon_n \notag\\
&\stackrel{(g)}\leq  \max_{\{x_{w}^n | w \neq k\}} \log\left(\textup{det}\left(\textup{cov}\left(Y_k^n|\{X_w^n = x_w^n | w \neq k\}\right)\right)\right) \notag\\
&\phantom{wwwwwww}-\log\left(\textup{det}\left(\textup{cov}\left(N_k^n\right)\right)\right)+n\epsilon_n \notag\\
&\stackrel{(h)}=  \max_{\{x_{w}^n | w \neq k\}} \log\left(\textup{det}\left(\textup{cov}\left(Y_k^n|\{X_w^n = x_w^n | w \neq k\}\right)\right)\right) \notag\\
&\phantom{wwwwwww}-n\log\left(2\sigma_k^2\right)+n\epsilon_n, \label{eq:convKg}
\end{align}
where (a) follows from the independence of $M_k$ and $\{X_w^n:w\neq k\}$, (b) from the fact that conditioning reduces entropy, and (c) from Fano’s inequality with error probability $P_e$ defined in \eqref{eq:errorprob}. Step (d) introduces $\epsilon_n := \tfrac{1}{n}+P_e R_k$, which satisfies $\epsilon_n\to 0$ as $n\to\infty$, and (e) follows from the data processing inequality, since $M_k \to X_k^n \to Y_k^n$ forms a Markov chain conditioned on $\{X_w^n:w\neq k\}$, while (f) uses the fact that $h(Y_k^n \mid \{X_w^n:w\neq k\})$ is an average over realizations $\{x_w^n:w\neq k\}$ and is therefore upper bounded by the maximum. Step (g) follows since Gaussian random vectors maximize differential entropy for a fixed covariance, and (h) uses that the additive noise is i.i.d. circularly symmetric complex Gaussian with covariance $2\sigma_k^2 I$.

In expression~\eqref{eq:convKg}, it remains to compute $\textup{det}\left(\textup{cov}\left(Y_k^n|\{X_w^n = x_w^n | w \neq k\}\right)\right)$. Towards this end, we arrange expression~\eqref{eq:Kuserapprox} conditioned on $\{X_w^n = x_w^n | w \neq k\}$ in vector form to obtain
\begin{align}
Y_k^{n}\big|\{X_w^n = x_w^n | w \neq k\} = \left(I+\Theta\right) X_k^{n} +N_k^{n}, \label{eq:vecform}
\end{align}
where $\Theta$ is an $n \times n$ diagonal matrix.
In particular, the diagonal elements of $\Theta$ are nothing but
\begin{align}
&\theta(i,i) = j \sum_{\substack{w=1\\w \neq k}}^K\:\: \sum_{m=-M}^{M} c_{k,w}^m |x_{w}[i-m]|^2.
\end{align}
Hence, using \eqref{eq:vecform}, the determinant of interest becomes
\begin{align}
&\textup{det}\left(\textup{cov}\left(Y_k^n|\{X_w^n = x_w^n | w \neq k\}\right)\right) \notag\\
&\phantom{www}= \textup{det}\left(\left(I+\Theta\right) \textup{cov}\left(X_k^{n}\right) \left(I+\Theta\right)^H+ 2\sigma_k^2 I  \right) \notag\\
&\phantom{www}\stackrel{(a)}= \textup{det}\left(\left(I+\Theta\right)^H \left(I+\Theta\right) \textup{cov}\left(X_k^{n}\right) + 2\sigma_k^2 I  \right), \label{eq:iandtheta}
\end{align}
where (a) follows from Sylvester’s determinant identity. Let $P_{k,i}:=\mathbb{E}[|X_k[i]|^2]$ denote the diagonal entries of $\operatorname{cov}(X_k^n)$, so that the average power constraint $\sum_{i=1}^n P_{k,i}\le nP_k$ imposes a trace constraint on $\operatorname{cov}(X_k^n)$. Since $(I+\Theta)^H(I+\Theta)$ in \eqref{eq:iandtheta} is diagonal, this trace constraint implies that the optimal input covariance is diagonal; see Lemma~\ref{lem:diag-opt} in Appendix~\ref{app:diag}. Consequently, expression \eqref{eq:iandtheta} becomes
\begin{align} \label{eq:detK}
&\textup{det}\left(\textup{cov}\left(Y_k^n|\{X_w^n = x_w^n | w \neq k\}\right)\right) \notag\\
&= \textup{det}\left(\left(I+\Theta\right)^H \left(I+\Theta\right) \textup{cov}\left(X_k^{n}\right) + 2\sigma_k^2 I  \right) \notag\\
&= \prod_{i=1}^n \biggl(\!P_{k,i}\biggl(\!1+\biggl(\sum_{\substack{w=1\\w \neq k}}^K \sum_{m=-M}^{M} c_{k,w}^m |x_{w}[i-m]|^2 \!\biggr)^2\biggr)+2\sigma_k^2\!\biggr)\!.
\end{align}
From \eqref{eq:convKg} and \eqref{eq:detK}, we obtain the following expression for the rate of user-$k$, $k \in \{1,2,\cdots,K\}$ (suppressing the limits of the summations over $w$ and $m$):
\begin{align}
&nR_k \notag\\
&\leq \max_{\{x_w^n | w \neq k\}}\! \sum_{i=1}^n \log\!\biggl(\!1\!+\!\frac{P_{k,i}}{2 \sigma_k^2}\!\biggl(\!1\!+\!\!\biggl(\sum_{w \neq k} \sum_{m} c_{k,w}^m |x_{w}[i\!-\!m]|^2 \!\!\biggr)^2\biggr)\!\!\!\biggr) \notag\\
&\phantom{wwwwwwwww}+n\epsilon_n. \label{eq:optim}
\end{align}
The above expression involves two optimizations:
\begin{enumerate}
\item the power allocation $\{P_{k,i}\}_{i=1}^n$ for the user of interest, subject to $\sum_{i=1}^n P_{k,i}\le nP_k$, and
\item the interferer transmissions $\{x_w^n:w\neq k\}$, each subject to $\sum_{i=1}^n |x_w[i]|^2\le nP_w$.
\end{enumerate}
We address these optimizations separately: for general coefficients $c_{k,w}^m$ in the high-SNR regime (next subsection), and for non-negative coefficients $c_{k,w}^m\ge 0$ at finite SNR
(Section~\ref{app:genieKfinite}, ahead).

\subsection{Extremal structure of the maximization}
The maximization appearing in~\eqref{eq:optim} is, in general, analytically intractable. 
Even though each interferer’s feasible set
\(
\mathcal{F}_w := \bigl\{x_w^n \in \mathbb{C}^n : \tfrac{1}{n}\sum_{i=1}^n |x_w[i]|^2 \le P_w \bigr\}
\)
is convex and the joint constraint set for all $w\ne k$ is a Cartesian product of such convex sets, the objective
\begin{equation*}
(x_w^n)_{w\ne k} \! \mapsto \! \sum_{i=1}^n 
   \log\!\biggl(\!1+\frac{P_{k,i}}{2\sigma_k^2}
       \biggl[\!1+\biggl(\!\sum_{w\ne k}\sum_{m} c_{k,w}^m\,|x_w[i-m]|^2\!\!\biggr)^{2}\!\biggr]\!\biggr)
\end{equation*}
is not jointly concave in the interferer sequences $(x_w^n)_{w\ne k}$. Consequently, there is no known closed-form solution or a general efficient algorithm for this global maximization under average power constraints.

We therefore restrict attention to the high-SNR regime, where $\log(1+x)\sim\log x$ isolates the leading term and renders the optimization tractable. In this regime, the constant term inside the logarithm is negligible compared to the signal-to-noise ratio $\frac{P_{k,i}}{2\sigma_k^2}$. Formally, by Lemma~\ref{lem:log_approx} in Appendix~\ref{sec:appA}, with $\gamma_{k,i} =\frac{P_{k,i}}{2\sigma_k^2}\Bigl[1+
\Bigl(\sum_{w\neq k}\sum_{m=-M}^M c_{k,w}^m |x_w[i-m]|^2\Bigr)^2\Bigr],$ the per-use mutual information satisfies $\frac{1}{n}\sum_{i=1}^n \log(1+\gamma_{k,i}) =\frac{1}{n}\sum_{i=1}^n \log\gamma_{k,i}+o(1)$ as the powers grow, uniformly over admissible strategies. Hence, we can write the following chain of inequalities (in the regime where user-$k$ has high SNR)
\begin{align}
&n(R_k-\epsilon_n) \notag\\
&\lessapprox \max_{\{x_w^n | w \neq k\}}\! \sum_{i=1}^n \log\biggl(\!\frac{P_{k,i}}{2 \sigma_k^2}\!\biggl(\!1\!+\!\biggl(\sum_{w \neq k} \sum_{m} c_{k,w}^m |x_{w}[i\!-\!m]|^2 \!\!\biggr)^2\biggr)\!\!\!\biggr) \notag\\
&= \max_{\{x_w^n | w \neq k\}} \sum_{i=1}^n \log\biggl(\frac{P_{k,i}}{2 \sigma_k^2}\biggr) \notag\\
&\phantom{www}+\max_{\{\mathbf{x}_w | w \neq k\}}\! \sum_{i=1}^n \log\biggl(\!1\!+\!\biggl(\sum_{w \neq k} \sum_{m} c_{k,w}^m |x_{w}[i\!-\!m]|^2 \!\biggr)^2\biggr) \notag\\
&\stackrel{(a)}\leq n \log \biggl(\frac{1}{n}\sum_{i=1}^n \frac{P_{k,i}}{2 \sigma_k^2}\biggr) \notag\\
&\phantom{w}+\max_{\{x_w^n | w \neq k\}}\! n\log\biggl(\!1\!+\!\frac{1}{n}\sum_{i=1}^n\biggl(\sum_{w \neq k} \sum_{m} c_{k,w}^m |x_{w}[i\!-\!m]|^2 \!\biggr)^2\biggr) \notag\\
&\stackrel{(b)}\leq n \log \biggl(\frac{P_{k}}{2 \sigma_k^2}\biggr) \notag\\
&\phantom{w}+\max_{\{x_w^n | w \neq k\}}\! n\log\biggl(\!1\!+\!\frac{1}{n}\sum_{i=1}^n\biggl(\sum_{w \neq k} \sum_{m} c_{k,w}^m |x_{w}[i\!-\!m]|^2 \!\biggr)^2\biggr)\!, \label{eq:optimhighSNR}
\end{align}
where (a) follows from Jensen's inequality\footnote{By Lemma 2, the high‑SNR approximation $\log(1+\gamma_{k,i}) = \log\gamma_{k,i} + o(1)$ decouples the power allocation $\{P_{k,i}\}$ from the interferer sequences $\{x_w^n\}$, allowing the maximizations to be separated since $P_{k,i}$ appears in $\gamma_{k,i}$ only via the first term of the numerator.} and (b) follows from the power constraint on user-$k$.

Since the logarithm is monotone, it suffices to maximize
\begin{align}
M
=\max_{\{x_w^n:w\neq k\}} \sum_{i=1}^n \Biggl(\sum_{\substack{w=1\\w\neq k}}^{K} \sum_{m=-M}^{M}
c_{k,w}^m |x_w[i-m]|^2 \Biggr)^2,
\end{align}
subject to $\sum_{i=1}^n |x_w[i]|^2 \le nP_w$ for all $w\neq k$.

Define $z_w[i]:=|x_w[i]|^2$, $\underline{z}_w=(z_w[1],z_w[2],\ldots,z_w[n])^\top$, and $\underline{z}:=\{\underline{z}_w\}_{w\neq k}$. Then 
\begin{align}
M = \max_{\underline{z}\ge 0} \;\underline{z}^\top C\,\underline{z} \triangleq \max_{\underline{z}\ge 0} Q(\underline z),
\end{align}
where $C$ is an $n(K\!-\!1)\times n(K\!-\!1)$ matrix determined by
$\{c_{k,w}^m\}$. The constraints become ($\mathbf{1}$ being the all-ones vector)
\begin{align}
\mathbf{1}^\top \underline{z}_w \le nP_w, \qquad \forall\, w\neq k .
\end{align}

The structure of $C$ is as follows. For each interfering user $w\neq k$, define the $n\times n$ Toeplitz matrix
\[
(T_{k,w})_{i,j} =
\begin{cases}
c_{k,w}^{\,i-j}, & |i-j|\le M,\\
0, & \text{otherwise},
\end{cases}
\qquad i,j\in\{1,2,\ldots,n\}.
\]
Let
\[
L_k \triangleq \big[\,T_{k,w}\,\big]_{w\neq k}\in\mathbb{R}^{\,n\times n(K-1)}.
\]
Then
\[
(L_k\underline z)_i=\sum_{w\neq k}\sum_{m=-M}^{M}c_{k,w}^{\,m}z_w[i-m],
\]
and the objective can be written as
\[
Q(\underline z)=\sum_{i=1}^n\big((L_k\underline z)_i\big)^2
=\|L_k\underline z\|_2^2
=\underline z^{\mathsf T}(L_k^\top L_k)\underline z .
\]
Hence,
\[
C=L_k^\top L_k\succeq 0 .
\]

We therefore maximize the convex quadratic $Q(\underline z)=\underline z^{\mathsf T}C\underline z$ subject to the average power constraints $\mathbf 1^\top\underline z_w\le nP_w$ for all $w\neq k$ and $\underline z\ge 0$. Since a convex function is maximized over a convex set at an extreme point, any global maximizer must lie on an
extreme point of the feasible set, as formalized next.


\begin{prop}[Extremal structure under average power]\label{prop:extreme}
Fix user $k$ and blocklength $n$. For each interferer $w\neq k$,
let
\[
\mathcal{F}_w :=\Bigl\{\underline z_w\in\mathbb{R}_+^{\,n}: \tfrac{1}{n}\mathbf 1^\top\underline z_w=P_w\Bigr\},
\qquad \mathcal{F}:=\prod_{w\neq k}\mathcal{F}_w ,
\]
and define $\underline z:=(\underline z_w)_{w\neq k}$. Consider the quadratic form
\[
Q(\underline z):=\underline z^{\mathsf T}C\underline z,
\]
where $C\succeq 0$ is the symmetric matrix shown above. Then:
\begin{enumerate}
\item $Q$ is convex on $\mathcal{F}$.
\item Any maximizer of $Q$ over $\mathcal{F}$ can be chosen at an extreme point of $\mathcal{F}$, i.e., for each $w\neq k$ there exists $j_w\in\{1,2,\ldots,n\}$ such that
\[
\underline z_w = nP_w\,\mathbf e_{j_w},
\]
\end{enumerate}
where \(\mathbf{e}_j\) is the \(j\)-th standard basis vector of \(\mathbb{R}^n\). In particular, $Q$ achieves its supremum over $\mathcal{F}$ at such an extreme allocation.
\end{prop}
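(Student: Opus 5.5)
The plan is to treat the two claims separately, using only that $C=L_k^\top L_k$ is positive semidefinite and that $\mathcal{F}$ is a product of scaled simplices.

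\emph{Convexity.} For any $\underline z,\underline z'$ and $\lambda\in[0,1]$,
\[
Q(\lambda\underline z+(1-\lambda)\underline z')=\lambda Q(\underline z)+(1-\lambda)Q(\underline z')-\lambda(1-\lambda)(\underline z-\underline z')^{\mathsf T}C(\underline z-\underline z').
\]
The last term is nonnegative because $C\succeq 0$. Hence $Q$ is convex on all of $\mathbb{R}^{n(K-1)}$, and in particular on the convex set $\mathcal{F}$.

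\emph{Extreme points of $\mathcal{F}$.} Each $\mathcal{F}_w=\{\underline z_w\ge 0:\mathbf 1^\top\underline z_w=nP_w\}$ is the simplex $nP_w\Delta_{n-1}$, whose extreme points are exactly $nP_w\mathbf e_j$, $j=1,\ldots,n$. The extreme points of a Cartesian product of convex sets are the tuples of extreme points of the factors. If some component were a nontrivial convex combination of two points in its factor, then the whole tuple would be a nontrivial combination of two points of $\mathcal{F}$ that differ only in that component. So $\mathrm{ext}(\mathcal{F})=\prod_{w\neq k}\{nP_w\mathbf e_{j}:j\in\{1,\ldots,n\}\}$. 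Since $\mathcal{F}$ is a polytope, it is compact and equals the convex hull of these finitely many points.

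\emph{Maximization.} $Q$ is continuous and $\mathcal{F}$ is compact, so the supremum is attained at some $\underline z^\star$. Write $\underline z^\star=\sum_\ell\mu_\ell\underline v_\ell$ as a convex combination of extreme points $\underline v_\ell$. By convexity (Jensen),
\[
Q(\underline z^\star)\le\sum_\ell\mu_\ell Q(\underline v_\ell)\le\max_\ell Q(\underline v_\ell).
\]
Hence some extreme point $\underline v_\ell$ is also a maximizer, and it has the form $\underline z_w=nP_w\mathbf e_{j_w}$ for each $w\neq k$. This is the claimed extremal structure.

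A point to flag: the set $\mathcal{F}$ here uses equality $\tfrac1n\mathbf 1^\top\underline z_w=P_w$, while the earlier constraint was the inequality $\mathbf 1^\top\underline z_w\le nP_w$. The inequality version can be handled the same way; its extra extreme point $\underline z_w=0$ is never strictly better. It suffices to compare $\underline 0$ with the vertices $nP_w\mathbf e_{j}$. Fix all other blocks and view $Q$ as a convex function of $\underline z_w$ along the ray $t\underline v$ with $\underline v=nP_w\mathbf e_j$. For $n>2M$ the entries of $C$ in the block $(w,w)$ and in the cross blocks $(w,w')$ are entrywise nonnegative sums of products of the $c$'s when the coefficients are nonnegative, so $t\mapsto Q$ is nondecreasing. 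For general signs, $\underline 0$ is not covered by this argument and is either kept as a candidate or excluded by appealing to the equality constraint as stated.

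The only genuinely delicate step is this boundary issue together with the extreme-point characterization of the product. The rest is standard convex analysis (the Bauer maximum principle for polytopes). I would state the result for the equality-constrained $\mathcal{F}$ exactly as written, which makes the argument clean.
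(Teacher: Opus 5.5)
Your proposal is correct and follows the same route as the paper. Both arguments use convexity of $Q$ from $C\succeq 0$, then identify $\mathcal{F}$ as a product of scaled simplices whose extreme points are the tuples $\underline z_w=nP_w\mathbf e_{j_w}$, and then invoke the fact that a convex function attains its maximum over a compact polytope at an extreme point. The only difference is that you prove the product extreme-point characterization and the maximum principle directly (by compactness and Jensen over a vertex decomposition), whereas the paper cites Rockafellar. Your side remark on the inequality-constrained set is not needed for the statement as written.
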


\begin{proof}
Please refer to Appendix~\ref{proof:prop3}.
\end{proof}

In particular, note that the time-constant choice $\underline{z_w} = P_w\mathbf{1}$ for each $w\neq k$ belongs to $\mathcal{F}$ because
\(
P_w\mathbf{1} = \frac{1}{n}\sum_{j=1}^n nP_w\: \mathbf{e}_j.
\)
Hence, the constant-amplitude assignment is one particular feasible point
that attains the same $Q(\underline{z})$ value as any other choice with the same per-user averages.

Thus, Proposition~\ref{prop:extreme} implies that the outer bound from \eqref{eq:optimhighSNR} can finally be rewritten as
\begin{align*}
nR_k \leq n\log\biggl(\!\frac{P_k}{2\sigma_k^2} \biggl[1\!+\!\biggl(\sum_{\substack{w=1\\w \neq k}}^K P_w \sum_{m=-M}^M c_{k,w}^m \biggl)^2\biggl] \biggl)+n\epsilon_n,
\end{align*}
where the last step follows by Jensen's inequality, with $\epsilon_n \to 0$ as $n \to \infty$. Clearly, this upper bound is achievable by constant amplitude signalling for all the interferers $w \neq k$, along with a Gaussian codebook for user-$k$. Hence, we conclude that the high-SNR capacity of user-$k$ is indeed characterized by \eqref{eq:capLBUB}.

\begin{remark}
We emphasize that the genie-aided upper bound holds for arbitrary input distributions satisfying the average power constraints. The derivation relies only on upper-bounding the conditional differential entropy of the channel output by that of a Gaussian random vector with the same covariance, and therefore imposes no structural assumptions on the inputs. Assumptions such as constant-amplitude signaling arise only in the achievability analysis and in identifying signaling strategies that attain the bound.
\end{remark}

\vspace{-4mm}

\subsection{Extremality under nonnegative kernels} \label{app:genieKfinite}
In the empirically validated case of the channel coefficients $c_{k,w}^m$ being non-negative~\cite{dar2013properties,secondini2013achievable,ramachandran2023capacity}, we can go further and obtain the finite-SNR capacity characterization as well.
\begin{prop}[Finite–SNR extremality under nonnegative kernels]\label{prop:finiteSNR-extreme}
Fix a user of interest $k$, and suppose the retained perturbative coefficients are nonnegative:
\[
c_{k,w}^m = \gamma S_{k,w}^{m,m,0} \;\ge\; 0
\qquad\forall\,w\ne k,\; m\in[-M,M].
\]
Then, for any fixed average powers $\{P_w\}_{w\ne k}$, the maximization of the finite–SNR interference objective over all interferer power sequences $z_w[i]=|x_w[i]|^2$ satisfying
\[
\mathcal{F}_w=\bigl\{\underline{z_w}\in\mathbb{R}_+^n:\tfrac1n\sum_{i=1}^n z_w[i]=P_w\bigr\},\quad w\ne k,
\]
admits an optimizer of the form
\[
\underline{z_w}^\star = nP_w\,\mathbf e_{j_w^\star},\qquad j_w^\star\in\{1,2,\ldots,n\},
\]
i.e. each interferer concentrates its average power on a single symbol time.
Consequently, the evaluation of the finite–SNR upper bound in~\eqref{eq:optim} can be restricted to such one–sparse interferer user power allocations.
\end{prop}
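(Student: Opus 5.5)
The plan is to reduce Proposition~\ref{prop:finiteSNR-extreme} to a vertex-maximization argument on a product of simplices, mirroring Proposition~\ref{prop:extreme}. Fix the user-of-interest allocation $\{P_{k,i}\}$, write $a_i:=P_{k,i}/(2\sigma_k^2)\ge 0$, and write the finite-SNR objective in \eqref{eq:optim} as
\[
F(\underline z)=\sum_{i=1}^n g_{a_i}\bigl((L_k\underline z)_i\bigr),\qquad g_a(u):=\log\bigl(1+a(1+u^2)\bigr).
\]
The hypothesis $c_{k,w}^m\ge 0$ makes $L_k$ entrywise nonnegative. Hence $u_i:=(L_k\underline z)_i\ge 0$ on $\mathcal F$, and only the branch $u\ge 0$ of $g_a$ matters. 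On that branch $g_a$ is nondecreasing. This is precisely where the sign condition enters: without it, $u_i$ could change sign and monotonicity would be lost.

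The intended argument is block-coordinate. Fix all interferers except one, $w$. Then $F$ restricted to $\mathcal F_w$ is a sum of functions of affine nonnegative forms in $\underline z_w$. If each summand is convex along $\mathcal F_w$, then the maximum over the simplex $\mathcal F_w$ is attained at a vertex $nP_w\mathbf e_{j_w}$. Cycling through $w\ne k$ never decreases $F$, and there are only finitely many vertex tuples. So a global maximizer exists that is one-sparse in every block, which is the claim. This step would reuse the extreme-point principle invoked for Proposition~\ref{prop:extreme}, applied blockwise instead of to a convex quadratic.

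The main obstacle is the convexity of $g_a$. Differentiating gives
\[
g_a''(u)\propto 1+a-au^2,
\]
so $g_a$ is convex only for $u^2\le (1+a)/a$ and is concave beyond that. The high-SNR device of Lemma~\ref{lem:log_approx}, which replaced $g_a$ by $\log a+\log(1+u^2)$ and then by the quadratic $Q$, is therefore not available here. I would first try to show that on the relevant range the composite $\underline z_w\mapsto\sum_i g_{a_i}(u_i)$ is still quasi-convex along segments of $\mathcal F_w$, for instance via a Schur-convexity or majorization argument. The goal would be to show that transferring mass between two time indices in a fixed direction increases $F$, using $c_{k,w}^m\ge 0$ and the Toeplitz structure of $T_{k,w}$.

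Before investing in that step, I would carry out a sanity check comparing the two candidate extremes. With constant amplitude, $F\approx n\log\bigl(1+a(1+(P_w\sum_m c^m_{k,w})^2)\bigr)$, which is linear in $n$. With a one-sparse allocation, only $2M+1$ indices see $u_i\approx c\,nP_w$, so $F$ grows only like $(2M+1)\cdot 2\log n$ plus $n\log(1+a)$. This comparison suggests that the concave tail of $g_a$ may actually favor spreading rather than concentrating. If so, the proof must either restrict the claim to a regime where all $u_i$ stay in the convex region $u^2\le(1+a)/a$, or replace one-sparse vertices by the constant-amplitude point as the extremizer. The latter is consistent with the achievability in Proposition~\ref{prop:allhelpsone} and with \eqref{eq:capLBUBfinite}. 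I would resolve this dichotomy first; everything else in the plan is routine.
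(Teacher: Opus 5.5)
Your plan stops short of a proof. The blockwise vertex step needs $F$ to attain its maximum over each simplex $\mathcal F_w$ at a vertex, and, as you correctly compute, $g_a''(u)\propto 1+a-au^2$ changes sign, so the convexity that would give this is unavailable. This is not something a Schur-convexity or majorization argument can get around. Your sanity check is essentially a counterexample, and the proposition is false as stated.

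To make it exact, take $K=2$, $M=0$, $c:=c_{k,w}^0>0$, $a_i\equiv a>0$, and set $\beta:=ac^2P_w^2/(1+a)$. Then $F(\underline z_w)=\sum_i\log(1+a+ac^2z_w[i]^2)$ and
\[
F(P_w\mathbf 1)-F(nP_w\mathbf e_1)=n\log(1+\beta)-\log(1+n^2\beta).
\]
This is positive for every $\beta>0$ once $n$ is large; your $\Theta(n)$ versus $O(M\log n)$ comparison gives the same for general $M$, including the paper's kernels. Already for $n=2$ it is positive iff $\beta>2$, and then neither endpoint of the segment $\mathcal F_w$ is a maximizer. In this example $F$ is a sum of identical strictly concave functions of the $z_w[i]$ on the concave branch. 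So transfers toward concentration strictly decrease $F$ there, i.e. majorization runs the wrong way.

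Your fallback to the constant-amplitude point is not safe either. For $\beta<2$ the same $n=2$ comparison favours the endpoint over the midpoint. Using $(2P_w,0,2P_w,0,\dots)$ as a deterministic interferer sequence known at receiver $k$ gives user $k$ the achievable rate $\tfrac12\log(1+a(1+4c^2P_w^2))+\tfrac12\log(1+a)$, which strictly exceeds \eqref{eq:capLBUBfinite} in that regime. Your other option, restricting to the regime where every attainable $u_i$ satisfies $u_i^2\le(1+a_i)/a_i$, does give a true vertex statement, since $F$ is then convex on $\mathcal F$. But $u_i$ is of order $nP_w$ at a vertex, so this forces $nP_w$ to stay bounded and is useless in the blocklength limit.

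Your analysis also shows where the paper's own proof fails. The paper computes $\partial F/\partial z_w[j]=\sum_i s[i]\,c_{k,w}^{i-j}\ge 0$ and writes the KKT conditions on $\mathcal F_w$. It then concludes that putting all of $nP_w$ on a coordinate maximizing the partial derivative is optimal. Finally it asserts that the constant-amplitude point attains the same value because $F$ depends only on the sum of powers. Each of these steps is flawed:
\begin{itemize}
\item A nonnegative gradient says nothing on the hyperplane $\mathbf 1^\top\underline z_w=nP_w$ beyond tightness of the sum constraint.
\item KKT conditions are only necessary for this nonconcave $F$. In the $n=2$ example both the midpoint, where all partials are equal, and the endpoints are KKT points.
\item Moving all mass to the argmax of the gradient is a linearization, justified only when $F$ is convex along $\mathcal F_w$. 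That is exactly what fails on the concave branch you found.
\item The equal-value claim is false, since $F$ depends on each $(L_k\underline z)_i$ separately, as the displayed difference shows.
\end{itemize}
So the dichotomy you wanted to settle first settles against the proposition, and no proof technique will establish it without additional hypotheses.
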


\begin{proof}
Please refer to Appendix~\ref{proof:prop4}
\end{proof}

Thus, Proposition~\ref{prop:finiteSNR-extreme} implies that the outer bound from \eqref{eq:optim} can finally be rewritten as
\begin{align*}
nR_k &\leq n\log\biggl(\!1+\frac{P_k}{2\sigma_k^2} \biggl[1\!+\!\biggl(\sum_{\substack{w=1\\w \neq k}}^K P_w \sum_{m=-M}^M c_{k,w}^m \biggl)^2\biggl] \biggl)+n\epsilon_n,
\end{align*}
where the last step follows by Jensen's inequality, with $\epsilon_n \to 0$ as $n \to \infty$. Clearly, this upper bound is achievable by constant amplitude signalling for all the interferers $w \neq k$, along with a Gaussian codebook for user-$k$. Hence, we conclude that the capacity of user-$k$ is indeed characterized by \eqref{eq:capLBUBfinite}.

\section{Numerical Results} \label{sec:num}
The parameters used in our numerical experiments are listed in Table~\ref{table:param}, and we initially consider $K=3$ WDM channels. As noted earlier, empirical evaluation of the RP kernels for these parameters yields $c_{k,w}^m \geq 0$, so we focus on the finite-SNR characterization. Figure~\ref{fig:OBsnr} shows the AHO inner bound for user~2, which coincides with the single-user capacity of Proposition~\ref{thm:OBgenieKfinite}, as a function of the symmetric input power $P_1\!=\!P_2\!=\!P_3$. For comparison, we also plot the TIN achievable rates obtained when $X_k^n$ are i.i.d.\ circularly symmetric complex Gaussian random variables with variance $P_k$, together with the reference curve $\log(1+\mathrm{SNR})$ for $\mathrm{SNR}=P_2/(2\sigma_2^2)$. As previously noted, first-order RP models are not energy preserving and can exhibit superlinear pre-log behavior at high SNR \cite{keykhosravi2019accuracy}. The accelerated growth in the shaded region of Fig.~\ref{fig:OBsnr} reflects this artifact. These rates lie outside the validity range of the RP approximation (above $5.2$~dBm) (see Fig.~\ref{fig:OBregionSSFM}). Importantly, the advantage of AHO over TIN is already evident at moderate launch powers, well before the onset of this artifact, indicating that the observed gains are not solely a consequence of RP energy non-conservation.

\begin{figure}[t]
\vspace{-5.5cm}
\begin{center}
\scalebox{1.0}{\input{fig2_avg.tikz}}
\end{center}
\caption{\footnotesize Comparison of the AHO inner bound of Prop.~\ref{thm:OBgenieKfinite} with the baseline TIN scheme as a function of input power. The red markers indicate the operating powers $0$, $2$, and $5$ dBm used to construct the three-dimensional achievable regions in Fig.~\ref{fig:OBregion}; the corresponding rates are annotated. The black dotted curve shows the $\log(1+\mathrm{SNR})$ benchmark for user~2 with $\mathrm{SNR}=P_2/(2\sigma_2^2)$ (see Sec.~\ref{sec:num}). The shaded gray region corresponds to power levels above $5.2$~dBm, where the RP model loses physical validity (cf. Fig.~\ref{fig:OBregionSSFM}). The plotted quantity is the single-user rate $C_k$; under symmetric time-sharing among $K$ users, the average per-user rate is $C_k/K$ (see Fig.~\ref{fig:succ_vs_tin_vs_aho} for $K=3$).}
\label{fig:OBsnr}
\vspace{-5mm}
\end{figure}

\begin{figure*}[!t]
\scalebox{0.9}{\input{fig3_avg.tikz}}
\caption{Achievable inner and outer bound regions for per-user transmit powers of (a) $0$ dBm, (b) $2$ dBm and (c) $5$ dBm. The intercepts of each region on the coordinate axes correspond to the red starred points shown in Fig.~\ref{fig:OBsnr}. The blue region is obtained using information-bearing 16-PSK interferers; unlike the triangular all-helps-one region, its pairwise rate-region projections are pentagonal, reflecting additional simultaneous-transmission operating points.}
\label{fig:OBregion}
\end{figure*}

\begin{figure}{}
\vspace{-5mm}
\begin{center}
\scalebox{0.64}{\input{fig4_avg.tikz}}
\end{center}
\caption{\footnotesize Capacity lower bounds based on the model \eqref{eq:Kuserapprox} (solid lines) and SSFM simulations (dashed lines). The shaded gray region corresponds to launch powers beyond approximately $5$\,dBm, where the RP model loses accuracy. The dash-dotted curves depict 3$\times$120\,km multi-span SSFM results; the CA-interferer advantage over TIN persists relative to single-span, confirming robustness to accumulated SPM and nonlinear distortion beyond RP validity.
}
\label{fig:OBregionSSFM}
\vspace{-4mm}
\end{figure}

Figures~\ref{fig:OBregion}(a)--(c) illustrate the rate trade-offs among the three users for equal input powers of $0$~dBm, $2$~dBm and $5$~dBm, respectively. The regions achievable under the \emph{treating interference as noise} (TIN) strategy, corresponding to Gaussian inputs for all users, are shown as the brown cubical regions. Formally, the achievable rate of user-$k$ under TIN can be written as the mutual information $R_k^{\text{TIN}} = I(X_k;Y_k)$, where the interference terms in~\eqref{eq:Kuserapprox} are modeled as additional Gaussian noise. Since the effective noise variance grows quadratically with input power, while the useful signal power grows only linearly, $R_k^{\text{TIN}} \to 0$ as $P_k \to \infty$. This explains the vanishing of the TIN regions in the strongly nonlinear regime.  

By contrast, employing constant-amplitude signaling for the interferers together with Gaussian codebooks for the user of interest (cf. Prop.~\ref{thm:OBgenieKfinite}) produces three corner rate triples in the three-dimensional rate space. Time-sharing among these corner points yields the AHO inner bound, depicted as the green polyhedral regions in Figs.~\ref{fig:OBregion}(a)--(c). It is evident that the AHO region strictly enlarges the set of achievable rate tuples compared to TIN, with the performance gap becoming more pronounced as the input power increases, as seen in the transition from Fig.~\ref{fig:OBregion}(a) to Fig.~\ref{fig:OBregion}(c). While the AHO region is derived using deterministic constant-amplitude interferers, it need not be restricted as such. The blue regions in Figs.~\ref{fig:OBregion}(a)--(c) are obtained using information-bearing 16-PSK interferers; unlike the triangular AHO region, its pairwise rate-region projections are pentagonal, reflecting additional operating points where multiple users simultaneously achieve nonzero rates.

While Theorem~\ref{thm:OBgenieK} is stated in the high-SNR regime $P_k\to\infty$, such asymptotic characterizations are standard in multi-user information theory~\cite{etkin2008gaussian}. Proposition~\ref{thm:OBgenieKfinite} extends the result to finite SNR under an empirically validated sign condition. Our results go beyond existing TIN inner bounds by identifying constant-amplitude interferer signaling as extremal and yielding a coordinate-wise matching AHO region, with the resulting characterization closely tracking achievable rates over practical launch powers 0--5\,dBm.

Fig.~\ref{fig:OBregionSSFM} compares the bounds of Proposition~\ref{thm:OBgenieK} and the TIN rates predicted by the RP model \eqref{eq:Kuserapprox} with split-step Fourier method (SSFM) simulations of the NLSE channel \eqref{eq:nlse}. Each user $k=1,2,3$ transmits an i.i.d.\ codeword $X_k^n$ with average power $P_k$. For the TIN benchmark, all users employ circularly symmetric complex Gaussian inputs $\mathcal{CN}(0,P_k)$, while for the lower-bound scenario, only the channel of interest ($k=2$) uses Gaussian signaling and the interfering users employ constant-amplitude $16$-PSK symbols. The propagated signals yield output sequences $Y_k^n$, from which the achievable rates $\frac{1}{n}\sum_{i=1}^n I(X_k[i];Y_k[i])$ are estimated. SSFM simulations use a uniform spatial step size of $0.01\,\mathrm{km}$.

Fig.~\ref{fig:OBregionSSFM} reiterates that Gaussian signaling for the desired user combined with constant-amplitude (CA) interferers yields higher achievable rates than when the interferers also employ amplitude-varying modulation. At $0$, $2$, and $5$ dBm, the achievable rate with CA interferers is approximately $6.6$, $7.7$, and $9.3$ bits/sym, respectively, compared with approximately $6.6$, $6.7$, and $6.1$ bits/sym under TIN. Thus, the advantage of CA interferer signaling is also visible in the full NLSE simulation.

To assess the robustness of this observation beyond the single-span configuration of Table~\ref{table:param}, Fig.~\ref{fig:OBregionSSFM} also includes SSFM results for a three-span $3 \times 120$ km link. Although the achievable rates decrease due to accumulated nonlinear distortion, CA 16-PSK interferers continue to outperform Gaussian interferers across the considered power range, with the relative gap between the two SSFM curves comparable to the single-span case. This confirms that the qualitative advantage of CA interferer signaling is not restricted to the specific single-span scenario used in the RP derivation.

The figure also shows that, at higher powers (beyond approximately $5.2$\,dBm), the SSFM rates fall below the RP predictions, reflecting the reduced accuracy of the simplified model \eqref{eq:Kuserapprox}. This threshold lies within the launch-power range typically associated with the onset of strong nonlinear distortion in dispersion-uncompensated WDM systems; prior studies \cite{dar2013properties,essiambre2010capacity} report near-optimal operation at only a few dBm per channel. Consequently, the agreement between the RP analysis and SSFM simulations until $5.2$\,dBm indicates that the qualitative conclusions persist in the underlying NLSE channel over a practically relevant operating regime.

\begin{remark}
The AHO scheme is coordinate-wise optimal, achieving the maximum rate for each user individually. However, these optima cannot be attained simultaneously, so time-sharing is required to obtain a multi-user trade-off. This highlights that the gap between inner and outer bounds arises from the simultaneity constraint, which is a structural consequence of the extremal input characterisation. Formally, the AHO region satisfies
$\sum_{k=1}^K R_k \leq \max_k C_k$ on its boundary, whereas the outer bound permits $R_k \leq C_k$ for each $k$.
\end{remark}

\begin{figure}{}
\vspace{-5mm}
\begin{center}
\scalebox{1.1}{\begin{tikzpicture}
\begin{axis}[
    width=0.92\linewidth,
    height=0.58\linewidth,
    xlabel={Number of users $K$},
    ylabel={Symmetric Rate (bits/sym)},
    xmin=1, xmax=20,
    ymin=0,
    grid=both,
    legend style={at={(0.02,0.98)},anchor=north west},
    tick align=outside,
]


\addplot[
    line width=1.5pt,
    orange,
    dashdotted,
    smooth,
    mark=none,
]
coordinates {
    (1,0.00) (2,1.40) (3,2.20) (4,2.70) (5,3.04)
    (6,3.33) (7,3.51) (8,3.67) (9,3.82) (10,3.92)
    (11,4.00) (12,4.08) (13,4.14) (14,4.20) (15,4.25)
    (16,4.30) (17,4.34) (18,4.38) (19,4.42) (20,4.46)
};
\addlegendentry{$C(K)\,(1-1/K)$}

\addplot[
    line width=1.5pt,
    green!60!black,
    dashdotted,
    smooth,
    mark=none,
]
coordinates {
    (1,2.00) (2,1.40) (3,1.10) (4,0.90) (5,0.76)
    (6,0.67) (7,0.59) (8,0.53) (9,0.48) (10,0.44)
    (11,0.40) (12,0.37) (13,0.34) (14,0.32) (15,0.30)
    (16,0.29) (17,0.27) (18,0.26) (19,0.24) (20,0.23)
};
\addlegendentry{$C(K)/K$}

\end{axis}
\end{tikzpicture}}
\end{center}
\caption{\footnotesize Scaling of the symmetric achievable rate and the gap to the outer bound as the number of WDM users $K$ increases.}
\label{fig:Kscale}
\vspace{-5mm}
\end{figure}

While the preceding results focus on $K=3$ for visualization purposes, the behavior for larger $K$ follows directly from the scaling of the single-user corner point. Let $C(K)$ denote the capacity corner point of Proposition~\ref{thm:OBgenieKfinite} for a symmetric $K$-user system. As shown in Fig.~\ref{fig:Kscale}, the symmetric rate achieved by the all-helps-one scheme scales as $C(K)/K$ and therefore decreases with $K$, whereas the gap to the Cartesian-product outer bound grows as $C(K)(1-1/K)$. Thus, increasing $K$ further accentuates the tension between single-user optimality and simultaneous multiuser operation.

We next consider a concrete multi-user decoding strategy. This allows us to benchmark the performance of simultaneous transmission against the time-shared all-helps-one scheme.

\textbf{Successive decoding--equalization scheme:}
For the sake of comparison, we also consider a successive decoding equalization strategy (SDE) for the RP interference channel with $K=3$ in \eqref{eq:Kuserapprox}. Fix the decoding order\footnote{Other orders are possible and lead to different rate tuples in general, but for the symmetric power setting here, the performance is order‑independent. In asymmetric scenarios, the optimal order decodes users with the highest effective signal‑to‑interference ratio first.} $3 \rightarrow 2 \rightarrow 1$. User~3 is decoded first while treating all nonlinear interference from users~1 and~2 as noise, yielding a rate $R_3 \le \frac{1}{n} I(X_3^n; Y_3^n)$ which coincides with treating interference as noise (TIN).

Upon successful decoding, the sequence $X_3^n$ becomes known at receivers~1 and~2. Conditioned on $X_3^n$, the effective channel for user~2 becomes
\[
Y_2[i]
= \bigl(1 + j S_{2}^{(3)}[i]\bigr) X_2[i]
+ j X_2[i] S_{2}^{(1)}[i]
+ N_2[i],
\]
where
\begin{align*}
S_{2}^{(3)}[i] := \sum_m c_{2,3}^m |x_3[i-m]|^2, \:
S_{2}^{(1)}[i] := \sum_m c_{2,1}^m |X_1[i-m]|^2 .
\end{align*}
The term $S_{2}^{(3)}[i]$ is known deterministically, while $S_{2}^{(1)}[i]$ remains unknown and acts as residual nonlinear interference. Equalization by the known factor $(1 + j S_{2}^{(3)}[i])$ yields
\[
\tilde Y_2[i]
= X_2[i]
+ j X_2[i] \tilde S_{2}^{(1)}[i]
+ \tilde N_2[i],
\]
which has the same structural form as the original channel but with only user~1 contributing interference. User~2 is then decoded at rate $R_2 \le \frac{1}{n} I(X_2^n; Y_2^n \mid X_3^n)$.

After decoding both users~3 and~2, receiver~1 knows $X_2^n$ and $X_3^n$, and hence all interference terms affecting user~1:
\[
S_1[i] = \sum_{w=2}^{3} \sum_m c_{1,w}^m |x_w[i-m]|^2 .
\]
The channel reduces to
\[
Y_1[i] = (1 + j S_1[i]) X_1[i] + N_1[i],
\]
which becomes an additive Gaussian channel with a known gain after equalization. The achievable rate is therefore $R_1 \le \frac{1}{n} I(X_1^n; Y_1^n \mid X_2^n, X_3^n).$

Figure~\ref{fig:succ_vs_tin_vs_aho} compares the sum rate of this scheme with TIN and the all-helps-one construction. The SDE strategy improves upon TIN by exploiting partial knowledge of decoded interferers to partially mitigate nonlinear distortion. At low launch powers, SDE also outperforms the all-helps-one scheme, since simultaneous transmission by all users efficiently utilizes the channel when nonlinear effects are weak. However, as power increases, residual stochastic nonlinear interference in the early decoding stages limits performance, causing SDE to fall below the all-helps-one curve. In contrast, the all-helps-one scheme avoids simultaneous interference entirely by operating at single-user optimal points via time sharing, which becomes increasingly advantageous in the strong-nonlinearity regime.

\begin{figure}
\vspace{-5mm}
\begin{center}
\scalebox{0.93}{\definecolor{mycolor1}{rgb}{0.76078,0.76078,0.76078}%
\definecolor{mycolor2}{rgb}{0.47412,0.87412,0.67863}%
\definecolor{mycolor3}{rgb}{0.92900,0.69400,0.12500}%
\definecolor{mycolor4}{rgb}{0.77000,0.48353,0.430000}%

\begin{tikzpicture}[tight background]
\begin{axis}[%
every axis/.append style={font=\small},
width=1.05\columnwidth,
height=0.8\columnwidth,
xmin=-35,
xmax=15,
xtick={-35,-30,-25,-20,-15,-10,-5,0,5,10},
xlabel={Launch Power (dBm)},
ymin=0,
ymax=10,
ylabel={Achievable Symmetric Rate (bits/sym)},
ylabel near ticks,
xlabel near ticks,
axis background/.style={fill=white},
xmajorgrids,
ymajorgrids,
legend style={
legend pos=north west,legend cell align=left,row sep=-0.5ex,grid style={dashed},font=\small
}
]

\addplot [color=mycolor2, mark=triangle*, mark size=1.5,  mark repeat=5, mark phase=4, line width=1.5pt]
  table[row sep=crcr, y expr=(1.1/3)*\thisrowno{1}]{%
-35	0.053910380168432\\
-32.4	0.0966460479040592\\
-30.3	0.153632400336519\\
-28.6	0.221818269368086\\
-27.1	0.304239080262668\\
-25.8	0.396918191503531\\
-24.6	0.503330111764171\\
-23.4	0.632592151892091\\
-22.3	0.773140291332751\\
-21.2	0.936101161926771\\
-20.1	1.12209180629173\\
-19	1.33100845377491\\
-17.9	1.56203932983848\\
-16.7	1.83760961301896\\
-15.4	2.16107127921973\\
-14	2.53439933963911\\
-12.5	2.95807411257401\\
-10.7	3.49165801940611\\
-8.6	4.139455761862\\
-6.1	4.93504978721244\\
-3.5	5.78595281140669\\
-1.90000000000001	6.33013478329129\\
-0.800000000000004	6.72629967049322\\
0	7.0367567691824\\
0.700000000000003	7.33403929250529\\
1.3	7.61685291758662\\
1.8	7.87909275655503\\
2.3	8.17161775038099\\
2.7	8.43124146498391\\
3.1	8.71630460520075\\
3.5	9.02852919862691\\
3.9	9.36866853711275\\
4.3	9.73639157278734\\
4.8	10.232672096258\\
5.3	10.765439340867\\
5.9	11.4450740803583\\
6.5	12.1592805172635\\
7.3	13.1500442646957\\
8.3	14.4279673604683\\
9.8	16.3872483510964\\
12.5	19.958935451836\\
15	23.2784399657178\\
};
\addlegendentry{All-helps-one scheme (AHO)}

\addplot [color=mycolor4, mark=triangle*, mark size=1.5,  mark repeat=5, mark phase=4, line width=1.0pt]
  table[row sep=crcr, y expr=(1.1/3)*\thisrowno{1}]{%
-35	0.053910380168297\\
-32.4	0.0966460479025812\\
-30.3	0.153632400326636\\
-28.6	0.221818269322618\\
-27.1	0.304239080089985\\
-25.8	0.396918190960939\\
-24.6	0.503330110220141\\
-23.4	0.632592147548934\\
-22.3	0.773140280242842\\
-21.2	0.936101133889998\\
-20.1	1.12209173605716\\
-19	1.33100827918214\\
-17.9	1.5620388982713\\
-16.7	1.83760845876827\\
-15.4	2.16106792815874\\
-14	2.5343887241094\\
-12.5	2.95803717330993\\
-10.7	3.49148945873452\\
-8.5	4.16971401712835\\
-4.2	5.50468229256688\\
-3.5	5.69446817286476\\
-2.90000000000001	5.83290944836092\\
-2.5	5.9056896839054\\
-2.1	5.95653882823546\\
-1.8	5.97639614871774\\
-1.5	5.97732843922363\\
-1.2	5.95645824132733\\
-0.899999999999999	5.91125349290134\\
-0.600000000000001	5.83983699775955\\
-0.300000000000004	5.74125742744292\\
0	5.61564883961075\\
0.299999999999997	5.46423095208318\\
0.600000000000001	5.28915156715762\\
1	5.0238206860684\\
1.4	4.72900117448969\\
1.9	4.33058680536951\\
2.6	3.74083316685753\\
3.9	2.63853165719195\\
4.5	2.16108379289557\\
5	1.79192098543657\\
5.4	1.52033521488747\\
5.8	1.27302585202779\\
6.2	1.05207874680286\\
6.6	0.858554732360687\\
7	0.692380908641127\\
7.4	0.55239845004575\\
7.8	0.436559332501297\\
8.2	0.342217723208407\\
8.7	0.250051907588009\\
9.2	0.181197750242958\\
9.8	0.122108981872884\\
10.5	0.0764417891746731\\
11.4	0.041526800219799\\
12.7	0.0170534630629575\\
14.8	0.00401478326309501\\
15	0.00349732023550331\\
};
\addlegendentry{Interference as noise (TIN)}

\addplot [color=red, mark=triangle*, mark size=1.5,  mark repeat=5, mark phase=4, line width=1.0pt]
  table[row sep=crcr, y expr=(1.1/3)*\thisrowno{1}]{%
-35	0.080\\
-32.4	0.145\\
-30.3	0.225\\
-28.6	0.320\\
-27.1	0.430\\
-25.8	0.560\\
-24.6	0.720\\
-23.4	0.900\\
-22.3	1.10\\
-21.2	1.33\\
-20.1	1.58\\
-19	1.85\\
-17.9	2.15\\
-16.7	2.50\\
-15.4	2.90\\
-14	3.35\\
-12.5	3.85\\
-10.7	4.50\\
-8.6	5.30\\
-6.1	6.20\\
-3.5	7.10\\
-1.9	7.70\\
-0.8	8.00\\
0	8.15\\
0.7	8.20\\
1.3	8.18\\
1.8	8.10\\
2.3	8.00\\
2.7	7.85\\
3.1	7.65\\
3.5	7.40\\
3.9	7.10\\
4.3	6.75\\
4.8	6.30\\
5.3	5.85\\
5.9	5.30\\
6.5	4.80\\
7.3	4.15\\
8.3	3.50\\
9.8	2.75\\
12.5	1.90\\
15	1.40\\
};
\addlegendentry{Successive decoding--equalization (SDE)}

\end{axis}
\end{tikzpicture}}
\caption{\footnotesize Comparison of achievable symmetric rates for $K=3$ users for AHO, TIN and SDE achievable schemes.}
\label{fig:succ_vs_tin_vs_aho}
\end{center}
\vspace{-4mm}
\end{figure}

\section{Conclusions} \label{sec:conc}
We derive high-SNR outer and inner bounds for the multiuser WDM interference channel under \emph{average} power constraints within the first-order regular perturbative approximation. A genie–aided argument yields closed-form single-user capacities and identifies constant-amplitude interferers as extremal; under an empirically validated condition (and within the truncated RP model), a finite-SNR single-user capacity characterization is also obtained. The resulting time-shared \emph{all-helps-one} region strictly outperforms TIN. While our capacity characterizations are derived for the first-order RP model, the qualitative trends are validated by SSFM simulations of the NLSE channel over practical launch powers.

Future work includes quantifying the gap between the all-helps-one region and the true capacity region and incorporating additional effects such as SPM and signal-noise interaction. Extensions to dual-polarization and multi-span links are also of interest, with the broader goal of clarifying the limits of nonlinear optical networks and guiding system design.



\appendices
\section{Diagonal optimal input covariance} \label{app:diag}
\begin{lemma}[Diagonal optimal input covariance]\label{lem:diag-opt}
Fix a user $k$ and blocklength $n$, and consider the effective single-user channel given by
\[
Y_k^n = H_k X_k^n + N_k^n,
\]
where $H_k=\diag(h_{k,1},h_{k,2},\ldots,h_{k,n})$ and $N_k^n\sim\mathcal{CN}0,\Sigma_k)$ with $\Sigma_k=\diag(\sigma_{k,1}^2,\sigma_{k,2}^2,\ldots,\sigma_{k,n}^2)$. For Gaussian inputs $X_k^n\sim\mathcal{CN}(0,K_k)$ under the constraint $\tr(K_k)\le nP_k$, 
\[
I(X_k^n;Y_k^n)
=\log\det\!\big(I_n+\Sigma_k^{-1/2}H_k K_k H_k^{\!H}\Sigma_k^{-1/2}\big)
\]
is maximized by a diagonal covariance matrix $K_k^\star$.
\end{lemma}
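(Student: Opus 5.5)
The plan is to reduce the problem to a standard parallel-channel optimization via Hadamard's inequality. This avoids any eigen-decomposition argument, because the whitened channel is already diagonal.

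First I will define $D:=\Sigma_k^{-1/2}H_k=\diag(d_1,\ldots,d_n)$ with $d_i=h_{k,i}/\sigma_{k,i}$. The objective then becomes $f(K_k)=\log\det(I_n+DK_kD^H)$. The matrix $A:=I_n+DK_kD^H$ is Hermitian positive definite for every feasible $K_k\succeq 0$. Its diagonal entries are
\[
A_{ii}=1+|d_i|^2 (K_k)_{ii}.
\]

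Next I will apply Hadamard's inequality $\det A\le\prod_i A_{ii}$, which holds for any positive semidefinite $A$. This gives
\[
f(K_k)\le\sum_{i=1}^n\log\bigl(1+|d_i|^2 (K_k)_{ii}\bigr)=f\bigl(\diag(K_k)\bigr),
\]
where $\diag(K_k)$ denotes the matrix obtained by zeroing the off-diagonal entries of $K_k$. The equality on the right holds because $D\,\diag(K_k)\,D^H$ is diagonal.

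I then need to check that $\diag(K_k)$ is feasible. It is positive semidefinite, since diagonal entries of a PSD matrix are nonnegative. It also has the same trace as $K_k$, so it satisfies $\tr\le nP_k$. Hence for every feasible $K_k$ there is a diagonal feasible matrix that does at least as well, so the supremum may be restricted to diagonal covariances.

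Existence of a maximizer follows because the feasible set $\{K\succeq 0,\ \tr K\le nP_k\}$ is compact and $f$ is continuous. Any maximizer $K^\circ$ therefore yields a maximizer $K_k^\star=\diag(K^\circ)$ that is diagonal. As a remark, the optimal diagonal entries are then given by water-filling over the gains $|d_i|^2$, which connects to the power-allocation step in \eqref{eq:optim}.

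The only point needing care is the equality case. If uniqueness were desired, I would use that Hadamard's inequality is strict unless $A$ is diagonal, and that whenever $d_i\neq 0$ for all $i$, $A$ is diagonal exactly when $K_k$ is diagonal. Since the lemma only asserts that the maximum is attained by a diagonal matrix, this refinement is not needed, and I expect no real obstacle.
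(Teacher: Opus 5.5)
Your proof is correct and follows essentially the same route as the paper: whiten to a diagonal gain, apply Hadamard's inequality to $I_n+DK_kD^H$, and note that replacing $K_k$ by its diagonal part preserves the trace constraint while attaining the Hadamard bound. Your version is slightly more careful in two places. First, you work directly with the complex gains $d_i=h_{k,i}/\sigma_{k,i}$, whereas the paper passes to $D_k^{1/2}K_kD_k^{1/2}$ with $D_k=H_k^H\Sigma_k^{-1}H_k$ without commenting on the phases. Second, you justify the existence of a maximizer by compactness, which the paper leaves implicit.
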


\begin{proof}
Let $D_k := H_k^{\!H}\Sigma_k^{-1}H_k=\diag(d_{k,1},d_{k,2},\ldots,d_{k,n})$ with $d_{k,i}\ge 0$. Then we have
\[
I(X_k^n;Y_k^n)
=\log\det\!\big(I_n + D_k^{1/2} K_k D_k^{1/2}\big).
\]
Since $D_k^{1/2}K_kD_k^{1/2}$ is Hermitian positive semidefinite, Hadamard’s inequality implies
\[
\det\!\big(I_n + D_k^{1/2} K_k D_k^{1/2}\big)
\le \prod_{i=1}^n \big(1+d_{k,i}(K_k)_{ii}\big),
\]
with equality if and only if $D_k^{1/2} K_k D_k^{1/2}$ is diagonal. Hence, correlation across symbols (off-diagonal entries of $K_k$) cannot increase $I(X_k^n;Y_k^n)$. Replacing $K_k$ by its diagonal part preserves the trace constraint and does not decrease the mutual information, so an optimal diagonal covariance $K_k^\star$ exists.
\end{proof}

\section{Error bound for the high-SNR logarithmic approximation} \label{sec:appA}
\begin{lemma}[Uniform log-approximation bound]\label{lem:log_approx}
Fix user $k$ and blocklength $n$. Define the per-symbol powers
$P_{k,i}:=\mathbb{E}[|X_k[i]|^2]$ and
\[
S_k[i]:= \sum_{\substack{w=1\\ w\neq k}}^{K} |x_w[i-m]|^2 \sum_{m=-M}^{M} c_{k,w}^m.
\]
Let the per-symbol SNRs and its blockwise lower envelope be 
\begin{equation}\label{eq:gamma-def}
\gamma_{k,i}:= \frac{P_{k,i}}{2\sigma_k^2}\bigl(1+S_k[i]^2\bigr),
\qquad
\underline{\gamma}_k := \min_{1\le i\le n}\gamma_{k,i}.
\end{equation}
Then the following bound holds
\begin{equation}\label{eq:avg-log-gap}
0 \le \frac{1}{n}\sum_{i=1}^n \bigl[\log(1+\gamma_{k,i})-\log\gamma_{k,i}\bigr]
\le \frac{1}{\underline{\gamma}_k}.
\end{equation}
Consequently, along any sequence of power allocations $\{P_{k,i}\}$ such that $\underline{\gamma}_k\to\infty$,
\begin{equation}\label{eq:avg-log-replace}
\frac{1}{n}\sum_{i=1}^n \log(1+\gamma_{k,i}) = \frac{1}{n}\sum_{i=1}^n \log\gamma_{k,i} + o(1).
\end{equation}
\end{lemma}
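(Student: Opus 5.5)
The proof reduces to a scalar inequality applied term by term. For each symbol index $i$, write
\[
\log(1+\gamma_{k,i})-\log\gamma_{k,i}=\log\Bigl(1+\tfrac{1}{\gamma_{k,i}}\Bigr).
\]
This expression is well defined and nonnegative whenever $\gamma_{k,i}>0$. When $\underline{\gamma}_k=0$ the right-hand side of \eqref{eq:avg-log-gap} is $+\infty$, so the bound is trivial, and we may assume $\underline{\gamma}_k>0$ throughout. The lower bound $0\le\cdot$ then holds termwise, because $\log$ is increasing and $1+1/\gamma_{k,i}\ge 1$.

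For the upper bound we use the elementary inequality $\log(1+u)\le u$ for $u\ge 0$, valid for the natural logarithm. If $\log$ is base $2$, the same argument gives the bound with an extra factor $\log_2 e$, which is immaterial for the $o(1)$ conclusion. Applying it with $u=1/\gamma_{k,i}$ gives
\[
\log\Bigl(1+\tfrac{1}{\gamma_{k,i}}\Bigr)\le \frac{1}{\gamma_{k,i}}.
\]
Since $\gamma_{k,i}\ge\underline{\gamma}_k$ by the definition of the blockwise lower envelope in \eqref{eq:gamma-def}, each term is at most $1/\underline{\gamma}_k$. Averaging over $i=1,\ldots,n$ then yields \eqref{eq:avg-log-gap}, with a constant that does not depend on $n$, on the allocation $\{P_{k,i}\}$, or on the interferer sequences.

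Because the bound in \eqref{eq:avg-log-gap} depends on the strategies only through $\underline{\gamma}_k$, the consequence \eqref{eq:avg-log-replace} follows immediately. Along any sequence with $\underline{\gamma}_k\to\infty$, the averaged gap is squeezed between $0$ and $1/\underline{\gamma}_k\to 0$. This is exactly the uniform-over-admissible-strategies statement invoked in Section~\ref{app:genieK}.

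There is no substantive obstacle here. The only points needing care are bookkeeping: the base of the logarithm (a constant factor), and ensuring $\gamma_{k,i}>0$ so the logarithms are finite, which is the trivial case handled at the start. The nonnegativity of $S_k[i]^2$ guarantees $\gamma_{k,i}\ge P_{k,i}/(2\sigma_k^2)$, but the argument does not need this.
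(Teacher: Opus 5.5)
Your proof is correct and follows the paper's argument: the termwise identity $\log(1+\gamma_{k,i})-\log\gamma_{k,i}=\log(1+1/\gamma_{k,i})\le 1/\gamma_{k,i}\le 1/\underline{\gamma}_k$, then averaging over $i$. Your remark that the constant $1/\underline{\gamma}_k$ holds exactly only for the natural logarithm (with a factor $\log_2 e$ in bits) is a valid refinement the paper leaves implicit, and it does not affect the $o(1)$ conclusion.
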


\begin{proof}
For any $\gamma>0$,
\[
\log(1+\gamma)-\log\gamma =\log\!\Big(1+\frac{1}{\gamma}\Big) \le \frac{1}{\gamma}.
\]
Hence, for each $i$,
\[
0\le \log(1+\gamma_{k,i})-\log\gamma_{k,i} \le \frac{1}{\gamma_{k,i}} \le \frac{1}{\underline{\gamma}_k}.
\]
Averaging over $i=1,2,\ldots,n$ yields \eqref{eq:avg-log-gap}. If $\underline{\gamma}_k\to\infty$, then $1/\underline{\gamma}_k\to0$, which gives \eqref{eq:avg-log-replace}.
\end{proof}


\section{Proof of Proposition~\ref{prop:extreme}} \label{proof:prop3}
Note that each \(\mathcal{F}_w\) is the scaled probability simplex
\[
\mathcal{F}_w
= \mathrm{Conv}\bigl\{\,nP_w\,\mathbf{e}_j:\ j=1,2,\dots,n\,\bigr\},
\]
where \(\mathbf{e}_j\) is the \(j\)-th standard basis vector of \(\mathbb{R}^n\).  
The feasible set for the interferer power allocations is the product
\[
\mathcal{F}
:= \prod_{\substack{w=1\\w\ne k}}^{K}\mathcal{F}_w
\subset \mathbb{R}^{n(K-1)},
\]
which is a compact convex polytope (since it is given by a finite product of simplices).
The interference vector $\underline{z}$ (whose entries collect the terms $|x_w[i-m]|^2$ appearing in the effective channel for user~$k$) is an affine function of $(\underline{z_w})_{w\neq k}$. Since the weight matrix $C\succeq 0$ in the objective $Q(\underline{z})$ is positive semidefinite, $Q$ is convex in $\underline{z}$ and hence convex in $(\underline{z_w})_{w\neq k}$.

By the standard result that a convex function attains its maximum over a compact convex polytope at an extreme point~\cite[Thm.~18.5]{rockafellar1970convex}, any maximizer of $Q$ over $\mathcal{F}$ can be chosen at an extreme point of $\mathcal{F}$. Since each $\mathcal{F}_w$ is a simplex, its extreme points are $\{nP_w\mathbf{e}_j:\,j=1,2,\ldots,n\}$, and thus
\[
\mathrm{ext}(\mathcal{F}) =\bigl\{(\underline{z_w})_{w\neq k}:\; \underline{z_w}=nP_w\,\mathbf e_{j_w},\ j_w\in\{1,2,\ldots,n\}\bigr\}.
\]
Equivalently, each interferer $w\neq k$ concentrates its entire power $nP_w$ on a single coordinate $j_w$.

\section{Proof of Proposition~\ref{prop:finiteSNR-extreme}} \label{proof:prop4}
Let $\underline z=(\underline{z_w})_{w\ne k}$ with $z_w[i]=|x_w[i]|^2$.  
Define the linear map $L_k:\mathbb{R}^{n(K-1)}\to\mathbb{R}^n$ by
\[
(L_k\underline z)_i \;=\;
\sum_{w\ne k}\sum_{m=-M}^{M} c_{k,w}^m\, z_w[i-m],
\qquad i=1,2,\dots,n,
\]
using the convention $z_w[t]=0$ if $t\notin\{1,2,\dots,n\}$ and $c_{k,w}^m=0$ for $|m|>M$. The finite–SNR objective in~\eqref{eq:optim} can then be rewritten as (with $a_i=\frac{P_{k,i}}{2\sigma_k^2}$)
\[
F(\underline z) := \sum_{i=1}^n \phi_i\bigl((L_k\underline z)_i\bigr),
\:\:\:
\phi_i(y) := \log\bigl(1+a_i(1+y^2)\bigr),\;a_i>0.
\]

Each $\phi_i$ is differentiable and $L_k$ is linear, so $F$ has continuous first derivatives with respect to $\underline z$. By the chain rule, for any interferer $w\ne k$ and coordinate $j\in\{1,2,\ldots,n\}$,
\[
\frac{\partial F}{\partial z_w[j]}
= \sum_{i=1}^n
\phi_i'\!\bigl((L_k\underline z)_i\bigr)
\frac{\partial (L_k\underline z)_i}{\partial z_w[j]}.
\]
Expanding $(L_k\underline z)_i$,
\[
(L_k\underline z)_i
= \sum_{w'\ne k}\sum_{m=-M}^{M} c_{k,w'}^m\, z_{w'}[i-m].
\]
When differentiating with respect to $z_w[j]$, only the $w'=w$ terms remain:
\[
\frac{\partial (L_k\underline z)_i}{\partial z_w[j]}
= \sum_{m=-M}^{M} c_{k,w}^m\,
\frac{\partial}{\partial z_w[j]} z_w[i-m].
\]
Since $\partial z_w[t]/\partial z_w[j] = \mathbf 1\{t=j\}$,
\[
\frac{\partial (L_k\underline z)_i}{\partial z_w[j]}
= \sum_{m=-M}^{M} c_{k,w}^m\,\mathbf 1\{\,i-m=j\,\}
= c_{k,w}^{\,i-j},
\]
with the convention $c_{k,w}^r=0$ if $|r|>M$.

Define
\[
s[i] := \phi_i'\!\bigl((L_k\underline z)_i\bigr)
= \frac{2a_i (L_k\underline z)_i}{1+a_i(1+(L_k\underline z)_i^2)},
\]
so that
\[
{\displaystyle
\frac{\partial F}{\partial z_w[j]}
= \sum_{i=1}^n s[i]\; c_{k,w}^{\,i-j},\qquad j=1,2,\dots,n.}
\]

\vspace{-1mm}

Under the sign assumption $c_{k,w}^m\ge0$ and since $(L_k\underline z)_i\ge 0$ (powers are nonnegative), we have $s[i]\ge0$, so that each partial derivative $\partial F/\partial z_w[j]\ge0$.


Now fix an interferer index $w$. Since $F$ is differentiable and coordinate-wise non-decreasing (i.e., $\partial F/\partial z_w[j]\ge 0$ for all $j$), we apply the KKT conditions for maximizing $F$ subject to the linear sum and nonnegativity constraints. Introducing a Lagrange multiplier $\lambda_w\in\mathbb{R}$ for $\mathbf{1}^\top\underline{z_w}=nP_w$ and a nonnegative multiplier vector $\mu\in\mathbb{R}^n_+$ for $\underline{z_w}\ge 0$, the stationarity and complementary slackness conditions are
\begin{equation}\label{eq:KKT}
\nabla_{\underline{z_w}} F = \lambda_w \mathbf{1} - \mu, \qquad \mu \ge 0, \qquad \mu^\top \underline{z_w} = 0 .
\end{equation}

If a coordinate $j$ is active ($z_w[j]>0$), then $\mu_j=0$ and $\frac{\partial F}{\partial z_w[j]} = \lambda_w$. For inactive coordinates ($z_w[j]=0$), we have $\mu_j\ge 0$ and thus $\frac{\partial F}{\partial z_w[j]} \le \lambda_w$. Therefore, every positive entry of an optimizer must occur at indices where the partial derivative attains its maximum value $\lambda_w=\max_j \frac{\partial F}{\partial z_w[j]}$. Because the feasible set $\mathcal{F}_w$ is a simplex (convex hull of the points $nP_w\, \mathbf{e}_j$), any feasible $\underline{z}_w$ can be written as a convex combination of its vertices. Since $F$ is coordinate-wise non-decreasing and we have identified a coordinate $j_w^\star$ where $\frac{\partial F}{\partial z_w[j]}$
is maximal, allocating the entire power $nP_w$ on that coordinate is in fact optimal. Therefore, we conclude that there exists an optimizer of the form
\begin{equation*}
\underline{z_w}^\star = nP_w\,\mathbf e_{j_w^\star}, \qquad
j_w^\star \in \arg\max_j \frac{\partial F}{\partial z_w[j]} .
\end{equation*}
Applying the same argument independently to each interferer $w\neq k$ yields a joint maximizer in which every $\underline{z_w}$ is one-sparse.\footnote{The joint decoupling follows because $\mathcal{F} = \prod_{w \neq k} \mathcal{F}_w$ is a Cartesian product of independent simplices: at any joint maximiser, for each $w$ there exists $\lambda_w$ satisfying the per-user KKT conditions; the coupling through $(L_k \underline{z})_i$ affects only the value of $\lambda_w$, not the one-sparse conclusion.} The constant‑amplitude assignment $\underline{z}_w = P_w \mathbf{1}$ yields the same value of $F(\underline{z})$ as the one‑sparse maximizer because $F$ depends on $(L_k \underline{z})_i$ only through the sum of powers across symbols. Hence, the upper bound is achievable.

\nocite{}
\bibliographystyle{IEEEtran}
\bibliography{mylitoptics.bib}

\end{document}